\def\draft{0}
\newtheorem{theorem}{Theorem}
\newtheorem{definition}{Definition}
\newtheorem{proposition}{Proposition}
\newtheorem{remk}{Remark}
\def\FullBox{\hbox{\vrule width 8pt height 8pt depth 0pt}}
\def\qed{\ifmmode\qquad\FullBox\else{\unskip\nobreak\hfil
\penalty50\hskip1em\null\nobreak\hfil\FullBox
\parfillskip=0pt\finalhyphendemerits=0\endgraf}\fi}
\newenvironment{proof}{\begin{trivlist} \item {\bf Proof:~~}}
  {\qed\end{trivlist}}
\def\qedsketch{\ifmmode\Box\else{\unskip\nobreak\hfil
\penalty50\hskip1em\null\nobreak\hfil$\Box$
\parfillskip=0pt\finalhyphendemerits=0\endgraf}\fi}
\newenvironment{proofof}[1]{\begin{trivlist} \item {\bf Proof
#1:~~}}
  {\qed\end{trivlist}}
\newcommand{\ie} {{\it i.e.\ }}
\newcommand{\eg} {{\it e.g.\ }}
\newcommand{\N}{{\mathbb{N}}}
\newcommand{\zo}{\{0,1\}}
\newcommand{\pr}[1]{\Pr\left[#1\right]}
\newcommand{\eps}{\varepsilon}
\newcommand{\authnote}[2]{{ \bf [#1's Note: #2]}}
\newcommand{\authnote}[2]{}
\newcommand{\COMMENT}[1]{}
\newcommand{\ket}[1]{|#1\rangle}
\newcommand{\ketbra}[2]{|#1\rangle\langle#2|}
\def\01{\{0,1\}}
\newcommand{\braket}[2]{\langle{#1}|{#2}\rangle} 
\def\01{\{0,1\}}
\newcommand{\triple}[3]{\langle{#1}|{#2}|{#3}\rangle}
\newcommand{\Tr}{\mbox{\rm Tr}}
\newcommand{\spa}[1]{\mathcal{#1}}
\newcommand{\Abort}{{\mathrm{Abort}}}
\title{Optimal bounds for quantum bit commitment}
\author{Andr\'e Chailloux$^*$ \\
LRI\\
Universit\'e Paris-Sud\\
andre.chailloux@lri.fr\\
\and
Iordanis Kerenidis\thanks{Supported in part by ANR CRAQ and AlgoQP grants of the French Ministry and in part by the European Commission under the Integrated Project Qubit Applications (QAP) funded by the IST directorate as Contract Number 015848.}\\
CNRS - LIAFA\\
Universit\'e Paris 7\\
jkeren@liafa.jussieu.fr}
\begin{document}

\maketitle


\begin{abstract}

Bit commitment is a fundamental cryptographic primitive with numerous applications. Quantum information allows for bit commitment schemes in the information theoretic setting where no dishonest party can perfectly cheat.
The previously best-known quantum protocol by Ambainis achieved a cheating probability of
at most $3/4$~\cite{Amb01}. On the other hand, Kitaev showed that no quantum protocol can have cheating probability less than $1/\sqrt{2}$~\cite{Kit03} (his lower bound on coin flipping can be easily extended to bit commitment). Closing this gap has since been an important and open question.

In this paper, we provide the optimal bound for quantum bit commitment. We first show a lower bound of approximately $0.739$, improving Kitaev's lower bound. We then present an optimal quantum bit commitment protocol which has cheating probability arbitrarily close to $0.739$.  More precisely, we show how to use any weak coin flipping protocol with cheating probability $1/2 + \eps$ in order to achieve a quantum bit commitment protocol with cheating probability $0.739 + O(\eps)$. We then use the optimal quantum weak coin flipping
protocol described by Mochon~\cite{Moc07}. To stress the fact that our protocol uses quantum effects beyond the weak coin flip, we show that any classical bit commitment protocol with access to perfect weak (or strong) coin flipping has cheating probability at least $3/4$.
\end{abstract}

\section{Introduction}

Quantum information has given us the opportunity to revisit information theoretic security in cryptography. The first breakthrough result was a protocol of Bennett and Brassard~\cite{BB84} that showed how to securely distribute a secret key between two players in the presence of an omnipotent eavesdropper. Thenceforth, a long series of work has focused on which other cryptographic primitives are possible with the help of quantum information. Unfortunately, the subsequent results were not positive. Mayers and Lo, Chau proved the impossibility of secure quantum bit commitment and oblivious transfer and consequently of any type of two-party secure computation~\cite{May97,LC97,DKSW07}. However, several weaker variants of these primitives have been shown to be possible~\cite{HK04,BCH+08}.

The main primitives that have been studied are coin flipping, bit commitment and oblivious transfer. Coin flipping is a cryptographic primitive that enables two distrustful and far apart parties, Alice and Bob, to create a random bit that remains unbiased even if one of the players tries to force a specific outcome. It was first proposed by Blum ~\cite{Blu81} and has since found numerous applications in two-party secure computation.
In the classical world, coin flipping is possible under computational assumptions like the hardness of factoring or the discrete log problem. However, 
in the information theoretic setting, it is not hard to see that in any classical protocol, one of the players can always bias the coin to his or her desired outcome with probability 1.  

Aharonov et al.~\cite{ATVY00} provided a quantum protocol where no dishonest player could bias the coin with probability higher than 0.9143. Then, Ambainis~\cite{Amb01} described an improved protocol whose cheating probability was at most $3/4$. Subsequently, a number of different protocols have been proposed~\cite{SR01,NS03,KN04} that achieved the same bound of $3/4$. On the other hand, Kitaev~\cite{Kit03}, using a formulation of quantum coin flipping protocols as semi-definite programs proved a lower bound of $1/2$ on the product of the two cheating probabilities for Alice and Bob (for a proof see \eg ~\cite{ABD+04}). In other words, no quantum coin flipping protocol can achieve a cheating probability less than $1/\sqrt{2}$ for both Alice and Bob. Recently, we resolved the question of whether $3/4$ or $1/\sqrt{2}$ is ultimately the right bound for quantum coin flipping by constructing a strong coin-flipping protocol with cheating probability $1/\sqrt{2}+\eps$ (\cite{CK09}).  

The protocol in \cite{CK09} is in fact a {\em classical} protocol that uses the primitive of  weak coin flipping as a subroutine. In the setting of weak coin flipping, Alice and Bob have a priori a desired coin outcome, in other words the two values of the coin can be thought of as `Alice wins' and `Bob wins'. We are again interested in bounding the probability that a dishonest player can win this game. Weak coin flipping protocols with cheating probabilities less than $3/4$ were constructed in ~\cite{SR02,Amb02,KN04}. Finally, a breakthrough result by Mochon resolved the question of the optimal quantum weak coin flipping. First, he described a protocol with cheating probability $2/3$~\cite{Moc04,Moc05} and then a protocol that achieves a cheating probability of $1/2+\eps$ for any $\eps>0$ ~\cite{Moc07}.
 
In other words, in coin flipping, the power of quantum really comes from the ability to perform weak coin flipping. If there existed a classical weak coin flipping protocol with arbitrarily small bias, then this would have implied a classical strong coin flipping protocol with cheating probability arbitrarily close to $1/\sqrt{2}$ as well.

In this paper, we turn our attention to bit commitment. Even though this primitive is closely related to coin flipping we will see that actually the results are surprisingly different. A bit commitment protocol consists of two phases: in the commit phase, Alice commits to a bit $b$; in the reveal phase, Alice reveals the bit to Bob. We are interested in the following two probabilities: Alice's cheating probability is the average probability of revealing both bits during the reveal phase, and Bob's cheating probability is the probability he can guess the bit $b$ after the commit phase. 

Using the known results about coin flipping we can give the following bounds on these probabilities. First, most of the suggested coin flipping protocols with cheating probability $3/4$ were using some form of imperfect bit commitment scheme. More precisely, Alice would quantumly commit to a bit $a$, Bob would announce a bit $b$ and then Alice would reveal her bit $a$. The outcome of the coin flip would be $a \oplus b$. Hence, we already know bit commitment protocols that achieve cheating probability equal to $3/4$. Note also that Ambainis had proved a lower bound of $3/4$ for any protocol of this type. On the other hand, a bit commitment protocol with cheating probability $p$ immediately gives a strong coin flipping protocol with the same cheating probability (by the above mentioned construction) and hence Kitaev's lower bound of $1/\sqrt{2}$ still holds. 

The question of the optimal cheating probability for bit commitment remained unresolved, similar to the case of coin flipping that was answered in \cite{CK09}. Here, we find the optimal cheating probability for quantum bit commitment, which surprisingly is neither of the above mentioned constants. In fact, we show that it is approximately $0.739$. 

We start by providing a lower bound for any quantum bit commitment protocol. In order to do so, we describe an explicit cheating strategy for Alice and Bob in any protocol. In high level, let $\ket{\psi_b}$ be the joint state of Alice and Bob after the commit phase and $\sigma_b$ Bob's density matrix, when Alice honestly commits to bit $b$. It is well known that there exists a cheating strategy for Bob that succeeds with probability
\[ P^*_B \geq \frac{1}{2}+\frac{\Delta(\sigma_0,\sigma_1)}{2}
\]
where $\Delta(\cdot , \cdot)$ denotes the trace distance between two density matrices.

For Alice, we consider the following cheating strategy. Instead of choosing a bit $b$ in the beginning of the protocol, she goes into a uniform superposition of the two possible values and controlled on this qubit she performs honestly the commit phase. Then, after the commit phase, when she wants to reveal a specific bit $b$, she first performs a unitary operation on her part to transform the joint state to one which is as close as possible to the honest state $\ket{\psi_b}$ (the unitary is given by Uhlmann's theorem) and then performs the reveal phase honestly. 

It is not hard to see that Alice's cheating probability is at least
\[
P^*_A \ge \frac{1}{2}\left( F^2(\sigma_+,\sigma_{0}) +  F^2(\sigma_+,\sigma_{1})\right)
\]
where $F(\cdot , \cdot)$ denotes the fidelity between two states and $\sigma_+ = \frac{1}{2}\left(\sigma_0 + \sigma_1\right)$.

 In order to conclude we prove our main technical lemma
\begin{proposition} 
Let $\sigma_0,\sigma_1$ any two quantum states. Let $\sigma_+ = \frac{1}{2}\left(\sigma_0 + \sigma_1\right)$. We have
\begin{align*}
 \frac{1}{2}\left( F^2(\sigma_+,\sigma_{0}) +  F^2(\sigma_+,\sigma_{1})\right) \ge (1 - (1 - \frac{1}{\sqrt{2}})\Delta(\sigma_0,\sigma_1))^2
\end{align*}
\end{proposition}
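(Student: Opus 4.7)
The plan is to first lower-bound each $F^2(\sigma_+,\sigma_b)$ by $\tfrac{1+F^2(\sigma_0,\sigma_1)}{2}$ via an explicit Uhlmann construction, then plug in Fuchs--van de Graaf, and finally close with a one-line algebraic identity that pins down the constant $1-1/\sqrt{2}$.

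First, I invoke Uhlmann's theorem to pick purifications $|\psi_0\rangle,|\psi_1\rangle$ of $\sigma_0,\sigma_1$ in some joint space $\mathcal{H}_{AB}$ whose phases are aligned so that $\langle\psi_0|\psi_1\rangle = F(\sigma_0,\sigma_1) =: F$ (real and nonnegative). I append a fresh qubit flag register $C$ and set $|\psi_+\rangle := \tfrac{1}{\sqrt 2}\bigl(|\psi_0\rangle|0\rangle_C + |\psi_1\rangle|1\rangle_C\bigr)$, which purifies $\sigma_+$ in $\mathcal{H}_{AB}\otimes\mathcal{H}_C$. For each $b\in\{0,1\}$ and each real unit vector $(\alpha,\beta)$, the state $|\phi_b\rangle := |\psi_b\rangle\otimes(\alpha|0\rangle_C+\beta|1\rangle_C)$ is also a purification of $\sigma_b$ in the same space. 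A direct inner-product calculation yields
\[
\langle\psi_+|\phi_0\rangle = \tfrac{1}{\sqrt 2}(\alpha+\beta F),\qquad \langle\psi_+|\phi_1\rangle = \tfrac{1}{\sqrt 2}(\alpha F+\beta),
\]
and maximizing each squared inner product over $\alpha^2+\beta^2=1$ by Cauchy--Schwarz produces $F^2(\sigma_+,\sigma_b)\ \geq\ (1+F^2)/2$ for both $b$. Averaging gives
\[
\tfrac{1}{2}\bigl(F^2(\sigma_+,\sigma_0)+F^2(\sigma_+,\sigma_1)\bigr)\ \geq\ \tfrac{1+F^2(\sigma_0,\sigma_1)}{2}.
\]

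Next, applying Fuchs--van de Graaf, $F^2(\sigma_0,\sigma_1)\geq 1-\delta^2$ with $\delta:=\Delta(\sigma_0,\sigma_1)$, the lower bound becomes $1-\delta^2/2$. It remains to verify $1-\delta^2/2\geq(1-c\delta)^2$ for $c:=1-1/\sqrt 2$ and $\delta\in[0,1]$. Expanding and rearranging, the difference equals $2c\delta-(c^2+\tfrac{1}{2})\delta^2$; using $2c = 2-\sqrt 2$ and $c^2 = \tfrac{3}{2}-\sqrt 2$, one checks $c^2+\tfrac{1}{2} = 2-\sqrt 2 = 2c$, so the difference factors as $(2-\sqrt 2)\delta(1-\delta)\geq 0$ on $[0,1]$.

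The one non-obvious step is the choice of family $|\phi_b\rangle$: joint concavity of squared fidelity is not immediate, but this particular ansatz in the extended space sidesteps the issue and delivers the clean inequality $F^2(\sigma_+,\sigma_b)\geq (1+F^2)/2$. The constant $1-1/\sqrt 2$ is not arbitrary---it is the unique $c$ satisfying $c^2+\tfrac12=2c$, which forces equality at the endpoints $\delta=0$ and $\delta=1$; the latter corresponds to $\sigma_0\perp\sigma_1$ and recovers the sharp value $1/2$ for the average squared fidelity.
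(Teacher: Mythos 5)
Your first step is correct and nicely done: the flag-qubit purification $\ket{\psi_+}=\tfrac{1}{\sqrt2}(\ket{\psi_0}\ket{0}+\ket{\psi_1}\ket{1})$ together with the family $\ket{\psi_b}\otimes(\alpha\ket{0}+\beta\ket{1})$ does give $F^2(\sigma_+,\sigma_b)\ge \tfrac{1}{2}(1+F^2(\sigma_0,\sigma_1))$. The proof breaks at the next step: you apply Fuchs--van de Graaf in the wrong direction. The inequalities read $1-F\le\Delta\le\sqrt{1-F^2}$, so the only lower bound on fidelity in terms of trace distance is $F\ge 1-\Delta$, i.e.\ $F^2\ge(1-\Delta)^2$; the relation $F^2\ge 1-\Delta^2$ you invoke is the reverse of the second inequality and is false. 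Concretely, for the commuting states $\sigma_0=\mathrm{diag}(1,0)$ and $\sigma_1=\mathrm{diag}(\tfrac12,\tfrac12)$ one has $F^2=\tfrac12$ while $1-\Delta^2=\tfrac34$. With the correct bound $F^2\ge(1-\delta)^2$ your chain only yields $1-\delta+\delta^2/2$, and the difference $1-\delta+\delta^2/2-(1-c\delta)^2=(\sqrt2-1)\,\delta(\delta-1)$ is $\le 0$ on $[0,1]$, so the argument does not close.

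Moreover, the strategy cannot be repaired downstream of Step 1, because the intermediate quantity $\tfrac{1}{2}(1+F^2(\sigma_0,\sigma_1))$ is itself sometimes strictly smaller than the right-hand side of the proposition. Take $\sigma_0=\mathrm{diag}(\tfrac12,\tfrac12,0)$ and $\sigma_1=\mathrm{diag}(0,\tfrac12,\tfrac12)$: here $F(\sigma_0,\sigma_1)=\Delta(\sigma_0,\sigma_1)=\tfrac12$, both sides of the proposition equal $\bigl(\tfrac12+\tfrac{1}{2\sqrt2}\bigr)^2\approx 0.7286$ (the bound is tight there), yet $\tfrac{1}{2}(1+F^2)=\tfrac58=0.625$. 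Any argument that first compresses all the information into the single scalar $F(\sigma_0,\sigma_1)$ has already lost too much. This is exactly why the paper's proof works harder: it forms the block-diagonal states $\rho_0=\tfrac12\ketbra{0}{0}\otimes\sigma_0+\tfrac12\ketbra{1}{1}\otimes\sigma_1$ and $\rho_+$, reduces to classical distributions via the POVM that achieves the fidelity, and then compares against an explicit three-point distribution by a CPTP map; that route keeps enough structure (essentially the joint behavior of the overlap pattern, not just $F(\sigma_0,\sigma_1)$) to reach the sharp constant $1-\tfrac{1}{\sqrt2}$. Your closing algebraic observation that $c=1-\tfrac{1}{\sqrt2}$ is pinned by $c^2+\tfrac12=2c$ is correct but is applied to an invalid intermediate bound.
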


By equalizing the two lower bounds that are expressed in terms of the trace distance we conclude that 
\begin{theorem}\label{LowerBound}
In any quantum bit commitment protocol with cheating probabilities $P_A^*$ and $P_B^*$ we have 
$\max\{P^*_A,P^*_B\} \ge 0.739$.
\end{theorem}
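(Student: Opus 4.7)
The plan is to combine the two cheating bounds already derived in the discussion preceding the theorem with the key inequality of Proposition~1, and then reduce the problem to a one-variable optimization over the trace distance $d := \Delta(\sigma_0,\sigma_1) \in [0,1]$.

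First, I would note that the analysis of Bob's cheating strategy yields
\[
P_B^* \geq \frac{1}{2} + \frac{d}{2},
\]
while the analysis of Alice's Uhlmann-based strategy, combined with Proposition~1, yields
\[
P_A^* \geq \tfrac{1}{2}\bigl(F^2(\sigma_+,\sigma_0)+F^2(\sigma_+,\sigma_1)\bigr) \geq \bigl(1-(1-\tfrac{1}{\sqrt 2})\,d\bigr)^2.
\]
Crucially, the lower bound for $P_B^*$ is monotonically increasing in $d$, whereas the lower bound for $P_A^*$ is monotonically decreasing in $d$ (the linear factor inside the square is always in $[1/\sqrt{2},1]$, hence positive). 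Therefore $\max\{P_A^*,P_B^*\}$ is bounded below by the value at which the two expressions coincide.

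Setting $c := 1-\tfrac{1}{\sqrt 2}$ and equalizing, I would solve the quadratic
\[
(1-cd)^2 \;=\; \tfrac{1}{2}+\tfrac{d}{2},
\]
or equivalently $c^2 d^2 - (2c+\tfrac{1}{2})\,d + \tfrac{1}{2} = 0$. Taking the smaller root (the larger root is outside $[0,1]$) gives $d^\star \approx 0.478$, and plugging back into Bob's bound yields $\tfrac{1}{2}+\tfrac{d^\star}{2} \approx 0.739$. Since for any protocol the realized $d$ forces either $P_B^* \geq \tfrac{1}{2}+\tfrac{d}{2}$ to exceed this value (if $d \geq d^\star$) or $P_A^* \geq (1-cd)^2$ to exceed it (if $d \leq d^\star$), the bound $\max\{P_A^*,P_B^*\}\geq 0.739$ follows in every case.

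The only non-routine ingredient is Proposition~1, which is assumed as given here; the remainder is elementary calculus and algebraic manipulation. The one subtlety worth double-checking is that both derived bounds on $P_A^*$ and $P_B^*$ apply simultaneously to any fixed protocol (so that one may legitimately take the pointwise maximum and then optimize over $d$), and that the reported constant $0.739$ is indeed a lower bound rather than an approximation in the wrong direction — both can be verified by keeping the closed form $d^\star = \bigl((2c+\tfrac12)-\sqrt{(2c+\tfrac12)^2-2c^2}\bigr)/(2c^2)$ and bounding the resulting expression $\tfrac{1}{2}+\tfrac{d^\star}{2}$ from below by $0.739$ using exact arithmetic in $\sqrt{2}$.
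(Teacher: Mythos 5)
Your proposal is correct and follows essentially the same route as the paper: it combines the two generic cheating bounds (Bob via Helstrom discrimination, Alice via the Uhlmann-based strategy plus the fidelity proposition) and equalizes them in the single variable $d=\Delta(\sigma_0,\sigma_1)$, arriving at the same quadratic with root $d^\star\approx 0.4785$ and hence the bound $0.739$. Your explicit monotonicity observation justifying the equalization step is a welcome (if minor) elaboration of what the paper leaves implicit.
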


Then, we provide a matching upper bound. We describe a quantum bit commitment protocol that achieves a cheating probability arbitrarily close to $0.739$. Out protocol uses a weak coin flipping protocol with cheating probability $1/2 + \epsilon$ as a subroutine and achieves a cheating probability for the bit commitment of $0.739 + O(\epsilon)$. 

\begin{theorem}
There exists a quantum bit commitment protocol that uses a weak coin flipping protocol with cheating probability $1/2+\epsilon$ as a subroutine and achieves cheating probabilities less than $0.739+O(\epsilon)$.
\end{theorem}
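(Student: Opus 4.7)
The plan is to construct a quantum bit commitment protocol whose cheating probabilities match, up to $O(\epsilon)$, the lower bound of Theorem~\ref{LowerBound}. The target value $0.739$ comes from choosing $d^\ast\in(0,1)$ such that $(1-(1-1/\sqrt{2})d^\ast)^2 = (1+d^\ast)/2 \approx 0.739$, which equalizes the two expressions that appear in the lower-bound analysis. The commit phase should be designed so that Bob's reduced state is one of two mixed states $\sigma_0,\sigma_1$ with $\Delta(\sigma_0,\sigma_1)=d^\ast$, while the average $\sigma_+=(\sigma_0+\sigma_1)/2$ saturates the fidelity inequality of Proposition~1. Concrete candidates come from suitable qubit commitment states in the spirit of the Ambainis/ATVY family: Alice prepares purifications $\ket{\psi_b}_{AB}$ of $\sigma_b$, sends the $B$ register to Bob, and retains the $A$ register.

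For the reveal phase I would use the weak coin flipping protocol with bias $\epsilon$ as a subroutine to enforce that Alice can do no better than the Uhlmann strategy underlying the lower bound. After Alice announces her bit $b$, the two parties run a WCF; if Alice wins the WCF, she sends the $A$ register to Bob who verifies the full purification $\ket{\psi_b}$ via a projective measurement, while if Bob wins he only measures his $B$ register to check consistency with the claimed $b$. An honest Alice passes both tests with probability $1$, so completeness holds. For Bob's cheating, since he learns nothing beyond $\sigma_b$ before the reveal, his optimal distinguishing advantage is $\frac{1}{2}+\frac{d^\ast}{2}=0.739$, with $O(\epsilon)$ slack accounting for any bias he can induce on the WCF. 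For Alice's cheating I would argue that any strategy effectively reduces to sending some reduced state $\rho$ on Bob's side in the commit phase; the WCF guarantee forces her success probability for revealing bit $b$ to be at most $F^2(\rho,\sigma_b)+O(\epsilon)$, since any larger success would translate into $\Omega(1)$ bias in the WCF against Mochon's bound. Averaging over the two bits, $P_A^\ast\leq \frac{1}{2}(F^2(\rho,\sigma_0)+F^2(\rho,\sigma_1))+O(\epsilon)$, which is maximized at $\rho=\sigma_+$ and yields $0.739+O(\epsilon)$ by Proposition~1 evaluated at $d=d^\ast$.

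The main obstacle is the reduction from an arbitrary Alice cheating strategy to a WCF bias. The intuition is that if Alice's reveal probability for some bit $b$ exceeds $F^2(\rho,\sigma_b)$ by more than $O(\epsilon)$, then conditioned on winning the WCF she is implicitly performing an operation that produces a purification of $\sigma_b$ closer than Uhlmann's theorem allows; such a near-optimal rotation can be recycled into a WCF cheating strategy that biases the outcome beyond $1/2+\epsilon$, contradicting Mochon's guarantee. Formalizing this reduction, and in particular arguing that the two branches of the reveal phase cannot be played off against each other by Alice, is the technical heart of the proof. Once it is done, matching parameters via the equalizing $d^\ast$ immediately yields the stated $0.739+O(\epsilon)$ bound and completes the theorem.
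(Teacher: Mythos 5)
Your high-level plan (engineer $\Delta(\sigma_0,\sigma_1)=d^\ast$ and force Alice into the Uhlmann strategy so the lower-bound computation becomes tight) names the right target, but the protocol you sketch does not achieve it, and the paper's key mechanism is missing. Two concrete problems. First, you place the weak coin flip in the reveal phase, with the \emph{full} verification occurring when Alice wins and only a weak check on Bob's register when Bob wins. Weak coin flipping only guarantees that a cheater cannot \emph{win} with probability more than $1/2+\eps$; nothing stops a cheating Alice from losing with probability $1$ (she can always declare Bob the winner). She will therefore always route herself into the easy branch, where Bob measures only his own register; since $\sigma_0$ and $\sigma_1$ must be non-orthogonal (their trace distance is $d^\ast<1$), a check on Bob's register alone cannot bind her, and the WCF buys nothing. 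Second, even for the bare ``send a purification'' commitment, your claim that $P_A^\ast\le\max_\rho\tfrac12\left(F^2(\rho,\sigma_0)+F^2(\rho,\sigma_1)\right)$ is maximized at $\rho=\sigma_+$ and equals $0.739$ ``by Proposition 1'' reads the inequality backwards: Proposition 1 is a \emph{lower} bound on the value at $\rho=\sigma_+$, whereas for an upper bound on Alice you need to control the maximum over all $\rho$, which is in general attained elsewhere and is strictly larger. This is exactly why the unaided purification protocol is stuck at $3/4$.

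The paper's resolution is to put the weak coin flip \emph{inside the commit phase}, run coherently without measurement: Alice and Bob first execute an unbalanced $WCF(1-p,\eps)$ producing the joint state $\sqrt{p}\,\ket{L}\otimes\ket{L,G_L}+\sqrt{1-p}\,\ket{W}\otimes\ket{W,G_W}$, and Alice then appends $\ket{bb}$ on the losing branch and $\ket{22}$ on the winning branch. The point is that the component $\ket{\overline{2}}$ is precisely the one a cheating Alice wants to overweight (it is the part that lets her decommit to either bit), and it is tagged with ``Alice wins the WCF''; the WCF security therefore caps its weight $r_2$ in any cheating commit state at $1-p+\eps$, which together with $r_0+r_1+r_2\le 1$ yields $P_A^\ast\le\left(\sqrt{p(p-\eps)/2}+\sqrt{(1-p)(1-p+\eps)}\right)^2=\left(1-(1-\tfrac{1}{\sqrt2})p\right)^2+O(\eps)$, while $P_B^\ast\le(1+p)/2+O(\eps)$ since the trace distance of his two reduced states is at most $p+\eps$. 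Equalizing at $p\approx 0.478$ gives $0.739+O(\eps)$. The reduction you flag as the ``technical heart'' (Alice cheating implies WCF bias) is exactly what you would need to supply and have not; in the paper it is immediate by construction, because the cheating-friendly amplitude literally \emph{is} Alice's WCF winning probability.
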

 
We note that our protocol is in fact quantum even beyond the weak coin flip subroutine. This is in fact necessary. We show that any classical bit commitment protocol with access to a perfect weak coin (or even strong) cannot achieve cheating probability less than $3/4$. 

\begin{theorem}\label{LowerBoundClassical}
Any classical bit commitment protocol with access to perfect weak (or strong) coin flipping cannot achieve cheating probabilities less than $3/4$.
\end{theorem}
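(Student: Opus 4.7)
My plan is to prove the two complementary bounds
\[
P_B^* \;\ge\; \tfrac{1}{2} + \tfrac{1}{2}\Delta(q_0,q_1), \qquad P_A^* \;\ge\; 1 - \tfrac{1}{2}\Delta(q_0,q_1),
\]
where $q_b$ is the distribution over full commit-phase public transcripts $T$ (including the outcomes of every coin-flip invocation) when Alice honestly commits to bit $b$, and $\Delta$ is statistical distance. Adding them gives $P_A^* + P_B^* \ge 3/2$, and hence $\max\{P_A^*, P_B^*\} \ge 3/4$. The bound on $P_B^*$ is immediate from the optimal distinguishing strategy applied to $T$, so the real work is lower bounding $P_A^*$.

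For Alice's strategy I would exploit the rectangular (product) structure of classical interactive protocols. Since Bob's messages are functions only of his private randomness $r_B$ and of the current transcript, never of $b$ directly, a direct counting argument shows that for every transcript $T$ in the support of both $q_0$ and $q_1$,
\[
\Pr[\,r_A, r_B \mid T, b\,] \;=\; \mu_b(r_A\mid T)\cdot\nu(r_B\mid T),
\]
with the factor $\nu$ \emph{independent of~$b$}. Alice's cheating strategy is then the natural one: honestly commit to $b=0$ and obtain $T$; if $T \in \mathrm{supp}(q_1)$ she resamples a fresh $\tilde r_A \sim \mu_1(\cdot \mid T)$ and executes the honest reveal phase as if her committed bit had been $1$ with private randomness $\tilde r_A$ (playing each coin-flip invocation honestly); otherwise she opens $0$. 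By the factorization, the pair $(\tilde r_A, r_B)$ she produces is distributed exactly as $(r_A, r_B)$ would be in an honest commit-$1$/reveal-$1$ execution conditioned on commit transcript $T$, so Bob accepts with probability $1$. Combining with the trivial opening to $0$,
\[
P_A^* \;\ge\; \tfrac{1}{2}\bigl(1 + \Pr_{T\sim q_0}[T\in\mathrm{supp}(q_1)]\bigr) \;\ge\; 1 - \tfrac{1}{2}\Delta(q_0,q_1),
\]
using the elementary inequality $\Delta(q_0,q_1) \ge \Pr_{T\sim q_0}[T\notin\mathrm{supp}(q_1)]$.

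The main obstacle is verifying that the factorization and the $b$-independence of $\nu(\cdot\mid T)$ genuinely survive (i) multi-round commit and reveal phases and (ii) the coin-flipping oracle. For a perfect strong coin flip the outcome is by definition an unbiased public bit independent of both players' private randomness and can simply be appended to $T$. For perfect weak coin flipping the worry is that a dishonest player could try to bias its outcome, but Alice's cheat plays every WCF invocation honestly, so the perfect WCF guarantee ensures each call contributes an unbiased public bit and preserves the rectangular structure of the overall transcript. Once this bookkeeping is handled, the two bounds add to $3/2$ and the theorem follows.
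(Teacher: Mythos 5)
Your argument is correct, and it reaches the $P_A^*+P_B^*\ge 3/2$ conclusion by a genuinely different pivot than the paper, even though the skeleton (both cheaters honest in the commit phase, coin-flip calls played honestly and treated as public random bits, then sum the two bounds) is the same. The paper never introduces $\Delta(q_0,q_1)$: it defines the sets $A_y$ of honest commit transcripts from which Alice has a \emph{deterministic} reveal strategy that is accepted for \emph{every} $R_B$ consistent with $t_C$ and every $c'$, sets $p_u=\Pr[t_C\in A_{\overline u}]$, and obtains $P_A^*\ge \frac12+\frac{p_0+p_1}{4}$ (Alice invokes the universal strategy when it exists) and $P_B^*\ge 1-\frac{p_0+p_1}{4}$ (Bob, who can compute membership in $A_0$ and $A_1$ himself, guesses $\overline u$ whenever $t_C\notin A_u$). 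Note that the roles are swapped relative to yours: in the paper it is \emph{Bob} whose bound has the form $1-(\cdot)$ and Alice's that has the form $\frac12+(\cdot)$. Your route instead makes Bob the generic optimal distinguisher on transcripts and puts the real work into Alice's side, where you need the rectangle property $\Pr[r_A,r_B\mid T,b]=\mu_b(r_A\mid T)\,\nu(r_B\mid T)$ with $\nu$ independent of $b$ to justify resampling $\tilde r_A\sim\mu_1(\cdot\mid T)$; this is sound because Bob's messages never depend on $b$, completeness holds for every tuple of random strings, and $\Pr_{T\sim q_0}[T\notin\mathrm{supp}(q_1)]\le\Delta(q_0,q_1)$. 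In effect your rectangle lemma shows that any $T$ in $\mathrm{supp}(q_0)\cap\mathrm{supp}(q_1)$ lies in $A_0\cap A_1$, so your argument implies $p_u\ge 1-\Delta(q_0,q_1)$ and recovers the paper's bounds; what the paper's formulation buys is that it sidesteps the explicit product-structure bookkeeping (it only needs that every honest transcript lies in $A_x$ and that $A_y$-membership is publicly computable), while yours makes the quantitative trade-off transparent in terms of a single, familiar statistical quantity.
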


Unlike the case of quantum strong coin flipping that is derived classically when one has access to a weak coin flipping protocol, the optimal quantum bit commitment takes advantage of quantum effects beyond the weak coin flipping subroutine.


\setcounter{theorem}{0}

\section{Preliminaries}

\subsection{Useful facts about trace distance and fidelity of quantum states}

We start by stating a few properties of the trace distance $\Delta$ and fidelity $F$ between two quantum states.

\begin{definition}
For any two quantum states $\rho,\sigma$, the trace distance $\Delta$ between them is given by $\Delta(\rho,\sigma) = \Delta(\sigma,\rho) = \frac{1}{2}tr(|\rho - \sigma|)$ where $|A| = \sqrt{A^{\dagger}A}$ for a matrix $A$
\end{definition}

\begin{proposition}~\label{SumTraceDistance}
For any two states $\rho,\sigma$ such that $\rho = \sum_{i} p_i \ketbra{i}{i}$ and $\sigma = \sum_i q_i \ketbra{i}{i}$, we have
\begin{align*}
\Delta(\rho,\sigma) = \sum_i \frac{1}{2} |p_i - q_i| = \sum_{i: p_i \ge q_i} (p_i - q_i) = 1 - \sum_i \min\{p_i,q_i\} = \sum_i \max\{p_i,q_i\} - 1
\end{align*}
\end{proposition}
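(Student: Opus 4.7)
The plan is to exploit the fact that $\rho$ and $\sigma$ are diagonal in the same orthonormal basis $\{\ket{i}\}$, which reduces the matrix computation defining $\Delta$ to elementary manipulations of the probability vectors $(p_i)$ and $(q_i)$.

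First I would observe that $\rho - \sigma = \sum_i (p_i - q_i)\ketbra{i}{i}$ is itself diagonal in the basis $\{\ket{i}\}$, so $|\rho - \sigma| = \sqrt{(\rho-\sigma)^\dagger (\rho-\sigma)}$ is diagonal with entries $|p_i - q_i|$. Taking the trace and halving gives the first equality
\[
\Delta(\rho,\sigma) \;=\; \tfrac{1}{2}\,\mathrm{tr}|\rho-\sigma| \;=\; \tfrac{1}{2}\sum_i |p_i - q_i|.
\]

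For the remaining three equalities, I would split the index set into $S = \{i : p_i \ge q_i\}$ and its complement. Because $\sum_i p_i = \sum_i q_i = 1$, the positive and negative parts of $(p_i-q_i)_i$ have equal total mass: $\sum_{i \in S}(p_i - q_i) = \sum_{i \notin S}(q_i - p_i)$. This immediately yields the second equality $\tfrac{1}{2}\sum_i |p_i - q_i| = \sum_{i \in S}(p_i - q_i)$. The remaining two identities follow from case-by-case evaluation $\min\{p_i,q_i\} = q_i$ and $\max\{p_i,q_i\} = p_i$ for $i\in S$ (and the reverse for $i \notin S$); combined with $\sum_i p_i = \sum_i q_i = 1$ one obtains
\[
1 - \sum_i \min\{p_i,q_i\} \;=\; \sum_{i\in S}(p_i - q_i) \;=\; \sum_i \max\{p_i,q_i\} - 1,
\]
closing the chain of equalities.

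I do not expect any real obstacle here: this is a standard textbook calculation, and the only care needed is to keep the case split between $S$ and its complement straight and to invoke the normalization of $\rho$ and $\sigma$ at the right moment.
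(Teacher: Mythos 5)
Your proposal is correct and follows essentially the same route as the paper: reduce to the probability vectors via the common diagonal basis, then use the normalization $\sum_i p_i = \sum_i q_i = 1$ together with the split over $\{i : p_i \ge q_i\}$ to chain the equalities. The only (minor) difference is that you explicitly justify the first equality $\Delta(\rho,\sigma) = \frac{1}{2}\sum_i |p_i - q_i|$ from the diagonal form of $|\rho - \sigma|$, which the paper takes for granted.
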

\begin{proof}
Since $\sum_i p_i = \sum_i q_i = 1$, we have
$
\sum_{i: p_i \ge q_i} (p_i - q_i) = \sum_{i p_i < q_i} (q_i - p_i)$ and $\sum_i \max\{p_i,q_i\} + \min\{p_i,q_i\} = 2$ hence
\begin{align*}
\Delta(\rho,\sigma) = \sum_i \frac{1}{2} |p_i - q_i| & = \frac{1}{2} \left(\sum_{i: p_i \ge q_i} (p_i - q_i) + \sum_{i: p_i < q_i} (q_i - p_i)\right) =  \sum_{i: p_i \ge q_i} (p_i - q_i) \\
\Delta(\rho,\sigma) = \sum_i \frac{1}{2} |p_i - q_i| & = \frac{1}{2} \sum_i (\max\{p_i,q_i\} - \min\{p_i,q_i\}) = 1 - \sum_i \min\{p_i,q_i\} = \sum_i \max\{p_i,q_i\} - 1
\end{align*}
\end{proof}

\begin{proposition}~\label{PropPOVM}
For any two states $\rho,\sigma$, and a POVM $E = \{E_1,\dots,E_m\}$ with $p_i = tr(\rho E_i)$ and $q_i = tr(\sigma E_i)$, we have $\Delta(\rho,\sigma) \ge \frac{1}{2} \sum_{i} |p_i - q_i|$. There is a POVM (even a projective measurement) for which this inequality is an equality.
\end{proposition}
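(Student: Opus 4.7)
The plan is to reduce everything to the Jordan/spectral decomposition of the Hermitian operator $\rho - \sigma$, which is the standard route to the ``POVM distinguishability'' characterization of trace distance. First I would write $\rho - \sigma = P - N$, where $P$ and $N$ are positive semidefinite and have orthogonal supports (so $P$ and $N$ are the positive and negative parts of $\rho-\sigma$ in its spectral decomposition). Because the supports are orthogonal, $|\rho-\sigma| = P + N$, and therefore
\begin{equation*}
\Delta(\rho,\sigma) = \tfrac{1}{2}\operatorname{tr}|\rho-\sigma| = \tfrac{1}{2}\bigl(\operatorname{tr}(P) + \operatorname{tr}(N)\bigr).
\end{equation*}
Also, since $\operatorname{tr}(\rho) = \operatorname{tr}(\sigma) = 1$, we have $\operatorname{tr}(P) = \operatorname{tr}(N) = \Delta(\rho,\sigma)$, a fact I will use for the equality case.

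Next I would prove the inequality. For each POVM element $E_i$, expand
\begin{equation*}
p_i - q_i \;=\; \operatorname{tr}\bigl(E_i(\rho-\sigma)\bigr) \;=\; \operatorname{tr}(E_i P) - \operatorname{tr}(E_i N).
\end{equation*}
Because $E_i, P, N \succeq 0$, both $\operatorname{tr}(E_i P)$ and $\operatorname{tr}(E_i N)$ are nonnegative (a product of PSD operators has nonnegative trace, via $\operatorname{tr}(E_i P) = \operatorname{tr}(E_i^{1/2} P E_i^{1/2})$). Hence $|p_i - q_i| \le \operatorname{tr}(E_i P) + \operatorname{tr}(E_i N)$, and summing using $\sum_i E_i = I$ gives
\begin{equation*}
\sum_i |p_i - q_i| \;\le\; \operatorname{tr}(P) + \operatorname{tr}(N) \;=\; 2\Delta(\rho,\sigma),
\end{equation*}
which is exactly the claimed bound.

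For the equality statement, I would exhibit a two-outcome projective measurement achieving it. Let $\Pi_+$ be the projector onto the support of $P$ (the positive eigenspace of $\rho-\sigma$) and set $\Pi_- = I - \Pi_+$. Since $P$ and $N$ have orthogonal supports, $\Pi_+ N = 0$ and $\Pi_- P = 0$, so $p_1 - q_1 = \operatorname{tr}(\Pi_+ P) = \operatorname{tr}(P)$ and $p_2 - q_2 = -\operatorname{tr}(\Pi_- N) = -\operatorname{tr}(N)$. Then $\tfrac{1}{2}\sum_i |p_i-q_i| = \tfrac{1}{2}(\operatorname{tr}(P)+\operatorname{tr}(N)) = \Delta(\rho,\sigma)$, giving equality.

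There is no real obstacle here; the only subtle point worth flagging is the positivity $\operatorname{tr}(E_i P) \ge 0$ used in the triangle-inequality step, which is the reason we can afford to drop the absolute value inside each term and still get a tight bound after summing. The construction of the optimal projective measurement from the Jordan decomposition of $\rho-\sigma$ is the structural content of the proposition.
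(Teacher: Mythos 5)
Your proof is correct and complete: the Jordan decomposition $\rho-\sigma = P-N$ with orthogonal supports, the bound $|p_i-q_i|\le \operatorname{tr}(E_iP)+\operatorname{tr}(E_iN)$ summed against $\sum_i E_i = I$, and the projector onto the support of $P$ for the equality case are all handled properly. The paper states this proposition without proof (it is the standard POVM characterization of trace distance), and your argument is exactly the standard one it implicitly relies on.
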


\begin{proposition}~\cite{Hel67}\label{Hel67}
Suppose Alice has a bit $c \in_R \zo$ unknown to Bob. Alice sends a quantum state $\rho_c$ to Bob. We have
\[
\Pr[\mbox{Bob guesses } c] \le \frac{1}{2} + \frac{\Delta(\rho_0,\rho_1)}{2}
\]
\end{proposition}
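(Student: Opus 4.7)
The plan is to recognize Bob's guessing task as a two-outcome POVM distinguishing problem and then invoke Proposition~\ref{PropPOVM}, which already encapsulates the operational meaning of trace distance. I would first observe that, without loss of generality, Bob's most general strategy is a two-outcome POVM $\{E_0,E_1\}$ applied to the state he receives, with the convention that he outputs guess $g$ upon observing outcome $E_g$. Writing $p_i = \mathrm{tr}(\rho_0 E_i)$ and $q_i = \mathrm{tr}(\rho_1 E_i)$, Bob's success probability, averaged over the uniformly random bit $c$, is exactly $\frac{1}{2}(p_0 + q_1)$.

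Then I would use the normalizations $p_0 + p_1 = q_0 + q_1 = 1$ to rewrite $\frac{1}{2}(p_0 + q_1) = \frac{1}{2}(1 + (p_0 - q_0))$. Bounding $p_0 - q_0 \le |p_0 - q_0|$ and using the fact that for a two-outcome POVM $|p_0 - q_0| = |p_1 - q_1|$, this quantity is at most $\frac{1}{2} + \frac{1}{2}\cdot\frac{1}{2}(|p_0 - q_0| + |p_1 - q_1|)$, which by Proposition~\ref{PropPOVM} is at most $\frac{1}{2} + \frac{1}{2}\Delta(\rho_0,\rho_1)$, matching the claimed bound. Taking the supremum over Bob's POVM completes the argument.

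There is essentially no obstacle: once Proposition~\ref{PropPOVM} is available, the Helstrom bound becomes a one-line calculation. The only points requiring care are the averaging over the uniform bit $c$ and the two-outcome identity $|p_0 - q_0| = |p_1 - q_1|$, which ensures that no constant factor is lost when passing from the success probability to the POVM inequality. A matching lower bound, not needed for the statement as written, could be obtained by choosing $E_0$ to be the projector onto the positive part of $\rho_0 - \rho_1$.
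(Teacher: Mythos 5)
Your proof is correct. The paper gives no proof of this proposition, citing it to Helstrom, but your derivation is the standard one and it checks out: with a two-outcome POVM the success probability is $\tfrac{1}{2}(p_0+q_1)=\tfrac{1}{2}+\tfrac{1}{2}(p_0-q_0)$, the normalization gives $|p_0-q_0|=|p_1-q_1|=\tfrac{1}{2}(|p_0-q_0|+|p_1-q_1|)$, and Proposition~\ref{PropPOVM} bounds this by $\Delta(\rho_0,\rho_1)$; the reduction of an arbitrary strategy to a two-outcome POVM is standard and harmless to assert.
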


\begin{definition}
For any two states $\rho,\sigma$, the fidelity $F$ between them is given by
$F(\rho,\sigma) = F(\sigma,\rho)= tr(\sqrt{\rho^{\frac{1}{2}}\sigma\rho^{\frac{1}{2}}}) $
\end{definition}

\begin{proposition}\label{POVMfidelity}
For any two states $\rho,\sigma$, and a POVM $E = \{E_1,\dots,E_m\}$ with $p_i = tr(\rho E_i)$ and $q_i = tr(\sigma E_i)$, we have $F(\rho,\sigma) \le \sum_{i} \sqrt{p_i q_i}$. There is a POVM for which this inequality is an equality.
\end{proposition}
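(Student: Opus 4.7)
The plan is to combine Uhlmann's variational characterization of the fidelity with the Hilbert--Schmidt Cauchy--Schwarz inequality applied outcome-by-outcome. First I would invoke the identity
\[ F(\rho,\sigma) = \max_{U} \bigl|tr(\rho^{1/2} \sigma^{1/2} U)\bigr|, \]
where the maximum ranges over unitaries $U$ on the underlying Hilbert space; this follows from the polar decomposition of $\rho^{1/2}\sigma^{1/2}$ and reduces the problem to lower-bounding $\sum_i \sqrt{p_i q_i}$ by $|tr(\rho^{1/2} \sigma^{1/2} U)|$ for an arbitrary unitary $U$.

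Next, for each POVM element I would rewrite $p_i = \|E_i^{1/2} \rho^{1/2}\|_2^2$ and $q_i = \|E_i^{1/2} \sigma^{1/2} U\|_2^2$ in the Hilbert--Schmidt norm, and apply Cauchy--Schwarz to obtain
\[ \sqrt{p_i q_i} \;=\; \|E_i^{1/2}\rho^{1/2}\|_2 \cdot \|E_i^{1/2}\sigma^{1/2}U\|_2 \;\ge\; \bigl|tr(\rho^{1/2} E_i \sigma^{1/2} U)\bigr|. \]
Summing over $i$, invoking the triangle inequality and using the completeness relation $\sum_i E_i = I$, I would then conclude
\[ \sum_i \sqrt{p_i q_i} \;\ge\; \bigl|tr(\rho^{1/2} \sigma^{1/2} U)\bigr|, \]
and optimizing over $U$ finishes the inequality.

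For the equality claim I would exhibit the Fuchs--Caves POVM. Assuming first that $\sigma$ is invertible, set $M = \sigma^{-1/2}(\sigma^{1/2} \rho \sigma^{1/2})^{1/2} \sigma^{-1/2}$, let $\{\ket{v_i}\}$ be an orthonormal eigenbasis of $M$, and take $E_i = \ketbra{v_i}{v_i}$. With the unitary $U$ coming from the polar decomposition of $\rho^{1/2}\sigma^{1/2}$, each of the Cauchy--Schwarz steps above collapses to an equality, and a direct computation confirms $\sum_i \sqrt{p_i q_i} = F(\rho,\sigma)$. The non-invertible case is handled by replacing $\sigma$ with $\sigma + \varepsilon I$ (suitably renormalized) and taking $\varepsilon \to 0$, using continuity of both sides in the operator norm.

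The step I expect to be the main obstacle is precisely this equality case: showing that a \emph{single} unitary $U$ simultaneously saturates every pointwise Cauchy--Schwarz bound forces $E_i^{1/2}\rho^{1/2}$ and $E_i^{1/2}\sigma^{1/2}U$ to be proportional for every $i$. Verifying that the Fuchs--Caves $E_i$'s, combined with the Uhlmann-optimal $U$, achieve this alignment is the only calculation requiring care; the rest of the argument is a direct unfolding of definitions.
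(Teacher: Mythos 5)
The paper states Proposition~\ref{POVMfidelity} as a standard preliminary fact and offers no proof of its own, so there is no in-paper argument to compare against; your proof is the classical Fuchs--Caves argument and it is correct. The chain
\[
\sum_i \sqrt{p_i q_i} \;=\; \sum_i \|E_i^{1/2}\rho^{1/2}\|_2\,\|E_i^{1/2}\sigma^{1/2}U\|_2 \;\ge\; \sum_i \bigl|tr(\rho^{1/2}E_i\sigma^{1/2}U)\bigr| \;\ge\; \bigl|tr(\rho^{1/2}\sigma^{1/2}U)\bigr|
\]
is valid for every unitary $U$, and maximizing over $U$ gives $\|\rho^{1/2}\sigma^{1/2}\|_1 = F(\rho,\sigma)$ in the paper's convention $F(\rho,\sigma)=tr(\sqrt{\rho^{1/2}\sigma\rho^{1/2}})$, as you intend. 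For the equality case, the verification is even more direct than aligning the Cauchy--Schwarz steps: with $M=\sigma^{-1/2}(\sigma^{1/2}\rho\sigma^{1/2})^{1/2}\sigma^{-1/2}$ one has $M\sigma M=\rho$ and $tr(M\sigma)=F(\rho,\sigma)$, so in the eigenbasis $M\ket{v_i}=\lambda_i\ket{v_i}$ one gets $p_i=\lambda_i^2 q_i$ and hence $\sum_i\sqrt{p_iq_i}=\sum_i\lambda_i q_i=tr(M\sigma)=F(\rho,\sigma)$. The only step you should tighten is the limit $\varepsilon\to 0$: the extremal POVM depends on $\varepsilon$, so ``continuity of both sides'' alone does not produce a single POVM achieving equality for the original $\sigma$. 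In finite dimension the set of orthonormal bases is compact, so you can pass to a convergent subsequence of the Fuchs--Caves bases and conclude; alternatively, handle the part of $\rho$ supported outside $\Supp(\sigma)$ separately (any outcome $i$ with $q_i=0$ contributes $0$ to both sides). This is a routine repair, not a gap in the approach.
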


\begin{proposition}[Uhlmann's theorem]
For any two states $\rho,\sigma$, there exist a purification $\ket{\phi}$ of $\rho$ and a purification $\ket{\psi}$ of $\sigma$ such that $|\braket{\phi}{\psi}| = F(\rho,\sigma)$
\end{proposition}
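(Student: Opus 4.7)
The plan is to parametrize purifications explicitly and reduce the claim to a matrix-trace maximization solved by the polar decomposition. First, fix an auxiliary Hilbert space $\mathcal{A}$ isomorphic to the system space and let $\ket{\Omega}=\sum_i\ket{i}\ket{i}$ be the unnormalized maximally entangled vector. A direct computation gives $\mathrm{tr}_{\mathcal{A}}\!\left((M\otimes I)\ketbra{\Omega}{\Omega}(M^{\dagger}\otimes I)\right)=MM^{\dagger}$, so on a sufficiently large auxiliary system every purification of $\rho$ can be written in the canonical form $\ket{\phi}=(\sqrt{\rho}\,U\otimes I)\ket{\Omega}$ for some unitary $U$ on $\mathcal{A}$, and analogously $\ket{\psi}=(\sqrt{\sigma}\,V\otimes I)\ket{\Omega}$ for every purification of $\sigma$. (Intuitively, this encodes the fact that two purifications of the same state differ by a unitary on the purifying register.)

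Next I would expand the overlap in the computational basis to obtain $\braket{\phi}{\psi}=\mathrm{tr}(U^{\dagger}\sqrt{\rho}\sqrt{\sigma}\,V)$. This reduces the theorem to establishing $\max_{U,V\text{ unitary}}\bigl|\mathrm{tr}(U^{\dagger}\sqrt{\rho}\sqrt{\sigma}\,V)\bigr|=\mathrm{tr}|\sqrt{\rho}\sqrt{\sigma}|$, which equals $F(\rho,\sigma)$ by the definition together with the standard fact that the nonzero singular values of $\sqrt{\rho}\sqrt{\sigma}$ coincide with those of $\sqrt{\sqrt{\rho}\sigma\sqrt{\rho}}$.

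For the maximization I would invoke the polar decomposition $\sqrt{\rho}\sqrt{\sigma}=W\,P$, where $P=|\sqrt{\rho}\sqrt{\sigma}|\ge 0$ and $W$ is a (partial) isometry. Writing $X=VU^{\dagger}W$ and using the elementary bound $|\mathrm{tr}(XP)|\le\|X\|_{\infty}\,\mathrm{tr}(P)$ for $P\ge 0$, I get $|\braket{\phi}{\psi}|\le\mathrm{tr}(P)$ since $\|X\|_{\infty}=1$. Equality is attained by choosing $V=I$ and $U=W$, which yields $\braket{\phi}{\psi}=\mathrm{tr}(P)=F(\rho,\sigma)$, finishing the proof.

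The only delicate point, rather than a real obstacle, is the rank-deficient case: if $\rho$ or $\sigma$ has nontrivial kernel then the $W$ produced by the polar decomposition is only a partial isometry. This is harmless because we may always enlarge $\mathcal{A}$ and complete $W$ to a full unitary; any such extension acts on the kernel of $P$ and therefore does not affect the value of $\mathrm{tr}(U^{\dagger}\sqrt{\rho}\sqrt{\sigma}V)=\mathrm{tr}(U^{\dagger}WPV)$.
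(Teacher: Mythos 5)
Your proof is correct. Note that the paper does not prove this proposition at all: Uhlmann's theorem is listed in the preliminaries as a standard fact and used as a black box, so there is no in-paper argument to compare against. What you give is the standard textbook derivation (canonical purifications $(\sqrt{\rho}\,U\otimes I)\ket{\Omega}$, reduction of the overlap to $\mathrm{tr}(U^{\dagger}\sqrt{\rho}\sqrt{\sigma}V)$, and the polar-decomposition maximization of $|\mathrm{tr}(UA)|$ at $\mathrm{tr}|A|$), and the steps all check out, including the identification $\mathrm{tr}\,|\sqrt{\rho}\sqrt{\sigma}| = \mathrm{tr}\sqrt{\rho^{1/2}\sigma\rho^{1/2}} = F(\rho,\sigma)$ and your handling of the rank-deficient case by extending the partial isometry to a unitary on the kernel of $P$, which leaves $\mathrm{tr}(U^{\dagger}WP)$ unchanged. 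One small remark: the completeness claim that \emph{every} purification has the canonical form is not needed for the existence statement being proved; exhibiting the two purifications that achieve $\mathrm{tr}(P)$ suffices.
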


\begin{proposition}\label{CPTPfidelity}
For any two states $\rho,\sigma$ and a completely positive trace preserving operation $Q$, we have $F(\rho,\sigma) \le F(Q(\rho),Q(\sigma))$.
\end{proposition}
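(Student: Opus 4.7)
The plan is to prove this monotonicity directly from Proposition \ref{POVMfidelity}. That proposition states $F(\rho,\sigma)\le\sum_i\sqrt{p_iq_i}$ for every POVM, with equality for at least one POVM, which together imply $F(\rho,\sigma)=\min_E\sum_i\sqrt{\Tr(\rho E_i)\,\Tr(\sigma E_i)}$. So it suffices to show that every POVM on the output space of $Q$ can be pulled back to a POVM on the input space that produces the same value of the sum.

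Concretely, I would fix an arbitrary POVM $E'=\{E'_1,\dots,E'_m\}$ on the output and define $E_i=Q^*(E'_i)$, where $Q^*$ is the Heisenberg-picture adjoint of $Q$, determined by the identity $\Tr(Q(\tau)A)=\Tr(\tau\,Q^*(A))$ for all $\tau,A$. Complete positivity of $Q$ makes $Q^*$ positive, so each $E_i\ge 0$; trace-preservation of $Q$ makes $Q^*$ unital, so $\sum_i E_i=Q^*(\sum_i E'_i)=Q^*(I)=I$. Hence $\{E_i\}$ is a valid POVM on the input. By construction, $\Tr(Q(\rho)E'_i)=\Tr(\rho E_i)$ and $\Tr(Q(\sigma)E'_i)=\Tr(\sigma E_i)$, so
\[
\sum_i\sqrt{\Tr(Q(\rho)E'_i)\,\Tr(Q(\sigma)E'_i)}
\;=\;\sum_i\sqrt{\Tr(\rho E_i)\,\Tr(\sigma E_i)}
\;\ge\;F(\rho,\sigma),
\]
where the inequality is Proposition \ref{POVMfidelity} applied to the input POVM $\{E_i\}$. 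Minimizing over $E'$ gives $F(Q(\rho),Q(\sigma))\ge F(\rho,\sigma)$.

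The only delicate step is the unital-positivity property of the adjoint $Q^*$, which is standard but not formally recorded in the preliminaries; it follows in a line from any Kraus decomposition of $Q$ and the definition of the adjoint. An alternative, more geometric route bypasses the adjoint and combines Uhlmann's theorem with a Stinespring dilation $Q(\tau)=\Tr_{\mathcal{E}}(V\tau V^\dagger)$ for an isometry $V$: applying $V\otimes I$ to the Uhlmann-optimal purifications $\ket{\phi},\ket{\psi}$ of $\rho,\sigma$ yields purifications of $Q(\rho),Q(\sigma)$ whose overlap is preserved because $V^\dagger V=I$, and this overlap equals $F(\rho,\sigma)$, which by the max-over-purifications characterization of fidelity lower-bounds $F(Q(\rho),Q(\sigma))$. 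Either approach is routine; I would lead with the POVM argument since it stays entirely within the machinery already introduced in the preliminaries.
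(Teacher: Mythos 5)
Your argument is correct. Note, however, that the paper does not actually prove Proposition~\ref{CPTPfidelity}: it is recorded in the preliminaries as a standard fact (monotonicity of fidelity under quantum channels) and invoked as a black box in Step~3 of the proof of Proposition~\ref{FidelityProp}, so there is no in-paper proof to compare against. Your POVM-pullback derivation is the classical textbook route and is sound: the identity $F(\rho,\sigma)=\min_{E}\sum_i\sqrt{\Tr(\rho E_i)\Tr(\sigma E_i)}$ does follow from the two halves of Proposition~\ref{POVMfidelity}; the pulled-back family $E_i=Q^*(E'_i)$ is a legitimate POVM because the adjoint of a positive map is positive and the adjoint of a trace-preserving map is unital (both immediate from a Kraus decomposition $Q(\tau)=\sum_k A_k\tau A_k^\dagger$, which gives $Q^*(X)=\sum_k A_k^\dagger X A_k$ and $Q^*(I)=\sum_k A_k^\dagger A_k=I$); and the rest is bookkeeping. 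The one caveat worth recording is that Proposition~\ref{POVMfidelity} is itself stated without proof in the preliminaries, so your argument reduces one standard unproved fact to another rather than to first principles --- which is perfectly consistent with the paper's level of rigor. Your alternative Stinespring--Uhlmann route is also correct, and is arguably the more self-contained of the two since Uhlmann's theorem is the other black-box fact the paper already relies on.
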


\subsection{Definition of quantum bit commitment}

\begin{definition}
A quantum commitment scheme  is an interactive protocol between Alice and Bob with two phases, a Commit phase and a Reveal phase.
\begin{itemize}
\item In the {\em commit} phase, Alice interacts with Bob   in order to commit to $b$.
\item In the {\em reveal} phase, Alice interacts with Bob in order to reveal $b$. Bob decides to accept or reject depending on the revealed value of $b$ and his final state. We say that Alice successfully reveals $b$, if Bob accepts the revealed value.
\end{itemize}

\noindent We define the following security requirements for the commitment scheme.
\begin{itemize}
\item \emph{Completeness:} If Alice and Bob are both honest then Alice always successfully reveals the bit $b$ she committed to.
\item  \emph{Binding property:} For any cheating Alice and for honest Bob, we define Alice's cheating probability as
\begin{align*}
P^*_A = \frac{1}{2}\left(
\Pr[\mbox{ Alice successfully reveals } b = 0 ] + 
\Pr[\mbox{ Alice successfully reveals } b = 1 ]\right)
\end{align*}
\item \emph{Hiding property:} For any cheating Bob and for honest Alice, we define Bob's cheating probability as
\begin{align*}
P^*_B = 
\Pr[\mbox{ Bob guesses } b \mbox{ after the Commit phase }] 
\end{align*}
\end{itemize}
\end{definition}

\paragraph{Remark:} 

The definition of quantum bit commitment we use is the standard one when one studies stand-alone cryptographic primitives. In this setting, quantum bit commitment has a clear relation to other fundamental primitives such as coin flipping and oblivious transfer~\cite{ATVY00,Amb01,Kit03,Moc07,CKS10}. Moreover, the study of such primitives sheds light on the physical limits of quantum mechanics and the power of entanglement. Recently there have been some stronger definitions of Quantum Bit Commitment protocols that suit better practical uses (see for example~\cite{DFR+07}). 
\COMMENT{
However, since we are interested in the stand alone primitive and the physical bounds concerning the feasibility of this primitive, we keep the original quantum bit commitment definition. This preserves the spirit of bit commitment in the sense that a cheating player should not be able to choose which bit he reveals {\it ie } that he cannot decommit to both bits without being caught.}
Notice that using our weaker definition of quantum bit commitment only strengthens our lower bound which also holds for the stronger ones.


We now describe more in detail the different steps on a quantum bit commitment protocol. We consider protocols where Alice reveals $b$ at the beginning of the decommit phase. Note that this doesn't help Bob and can only harm a cheating Alice. Proving a lower bound for such protocols will hence be a lower bound for all bit commitment protocols. 

We assume here that Alice and Bob are both honest. Let $\spa{A}$ Alice's space and $\spa{B}$ Bob's space.

\paragraph{The commit phase:} Alice wants to commit to a bit $b$. Alice and Bob communicate with each other and perform some quantum operations. This can be seen as a joint quantum operation which depends on $b$. We can suppose wlog that this operation is a quantum unitary $U^C_b$ (by increasing Alice and Bob's quantum space). At the end of the commit phase, Alice and Bob share the quantum state $\ket{\psi_b}$. Let $\sigma_b = \Tr_{\spa{A}} \ketbra{\psi_b}{\psi_b}$ the state that Bob has after the commit phase.

\paragraph{The reveal phase:} Alice wants to reveal $b$ to Bob. Alice reveals $b$ at the beginning of the decommit phase.
 Similarly to the commit phase, we can suppose that the decommit phase is equivalent to Alice and Bob performing a joint unitary $U^D_b$ on their shared state ($\ket{\psi_b}$ if they were honest in the Commit phase). At the end, Bob performs a check to see whether Alice cheated or not. In the honest case, Bob always accepts.

\subsection{Definitions of Coin flipping} \label{coinDefs}

We provide the formal definitions of all the different variants of coin flipping protocols that we are going to use.

In a coin flipping protocol, we call a round of communication one message from Alice to Bob and one message from Bob to Alice. We suppose that Alice always sends the first message and Bob always sends the last message. The protocol is quantum if we allow the parties to send quantum messages and perform quantum operations. A player is honest if he or she follows the protocol. A cheating player can deviate arbitrarily from the protocol but still outputs a value at the end of it.
There are two important variants of coin flipping that have been studied.

\paragraph{Strong Coin Flipping\\}
A strong coin flipping protocol between two parties Alice and Bob is a protocol where Alice and Bob interact and at the end, Alice outputs a value $c_A \in \{0,1,\Abort\}$ and Bob outputs a value $c_B \in \{0,1,\Abort\}$. If $c_A = c_B$, we say that the protocol outputs $c = c_A$. If $c_A \neq c_B$ then the protocol outputs $c = \Abort$.

A strong coin flipping protocol with bias $\varepsilon$ ($SCF(\varepsilon)$) has the following properties
\begin{itemize}
\item If Alice and Bob are honest then $\pr{c = 0} = \pr{c=1} = 1/2$
\item If Alice cheats and Bob is honest then $P^*_A = \max\{\pr{c = 0},\pr{c = 1}\} \le 1/2 + \varepsilon$.
\item If Bob cheats and Alice is honest then $P^*_B = \max\{\pr{c = 0},\pr{c = 1}\} \le 1/2 + \varepsilon$
\end{itemize} 
The probabilities $P^*_A$ and $P^*_B$ are called the cheating probabilities of Alice and Bob respectively. The cheating probability of the protocol is defined as $\max\{P^*_A,P^*_B\}$. We say that the coin flipping is \emph{perfect} if $\eps = 0$. This is because a player that want to Abort can always declare victory rather than aborting without reducing the security of the protocol(see~\cite{Moc07}).

\paragraph{Weak coin flipping\\}

A weak coin flipping protocol between two parties Alice and Bob is a protocol where Alice and Bob interact and at the end, Alice outputs a value $c_A \in \{0,1\}$ and Bob outputs a value $c_B \in \{0,1\}$. If $c_A = c_B$, we say that the protocol outputs $c = c_A$. If $c_A \neq c_B$ then the protocol outputs $c = \Abort$. The difference with Strong coin flipping is that the players do not Abort. This is because a player that wants to Abort can always declare victory rather than aborting without reducing the security of the protocol.

A (balanced) weak coin flipping protocol with bias $\varepsilon$ ($WCF(1/2,\varepsilon)$) has the following properties
\begin{itemize}
\item If $c = 0$, we say that Alice wins. If $c = 1$, we say that Bob wins.
\item If Alice and Bob are honest then $\pr{\mbox{ Alice wins }} = \pr{ \mbox{ Bob wins }} = 1/2$
\item If Alice cheats and Bob is honest then $P^*_A = \pr{ \mbox{ Alice wins }} \le 1/2 + \varepsilon$
\item If Bob cheats and Alice is honest then $P^*_B = \pr{ \mbox{ Bob wins }}\le 1/2 + \varepsilon$
\end{itemize}
Similarly, $P^*_A$ and $P^*_B$ are the cheating probabilities of Alice and Bob. The cheating probability of the protocol is defined as $\max\{P^*_A,P^*_B\}$.

 We can also define weak coin flipping for the case where the winning probabilities of the two players in the honest case are not equal.

\paragraph{Unbalanced weak coin flipping\\} 
A weak coin flipping protocol with parameter $z$ and bias $\varepsilon$ ($WCF(z,\varepsilon))$ has the following properties.

\begin{itemize}
\item If $c = 0$, we say that Alice wins. If $c = 1$, we say that Bob wins.
\item If Alice and Bob are honest then $\pr{\mbox{ Alice wins }} = z$ and $\pr{\mbox{ Bob wins }} = 1 - z$
\item If Alice cheats and Bob is honest then $P^*_A = \pr{\mbox{ Alice wins }} \le z + \varepsilon$
\item If Bob cheats and Alice is honest then  $P^*_B = \pr{\mbox{ Bob wins }} \le (1 - z) + \varepsilon$
\end{itemize}

\paragraph{Reformulation of Quantum weak coin flipping protocol}
We reformulate here the definition of a quantum weak coin flipping to take into account the fact that Alice and Bob are quantum players that perform unitary operations during the protocol and at the end they perform a measurement on a quantum register in order to get their classical output. More precisely, let $\spa{O}_{A}$ (resp. $\spa{O}_B$) be Alice's (resp. Bob's) one-qubit output register. At the end of the protocol Alice (resp. Bob) has a state $\rho_A$ in $\spa{O}_A$ ( resp. $\rho_B$ in $\spa{O}_B$ ). They also share some garbage state. The players get their output value by measuring their output qubit in the computational basis. Let $\rho_{AB}$ the joint output state of Alice and Bob in $\spa{O}_A \otimes \spa{O}_B$. In this setting, a weak coin flipping has the following properties.
\begin{itemize}
\item The $0$ outcome corresponds to Alice winning. The $1$ outcome corresponds to Bob winning.
\item If Alice and Bob are honest then $\triple{00}{\rho_{AB}}{00} = \triple{11}{\rho_{AB}}{11} = 1/2$
\item If Alice cheats and Bob is honest then $P^*_A = \triple{0}{\rho_B}{0} \le 1/2 + \varepsilon$
\item If Bob cheats and Alice is honest then $P^*_B = \triple{1}{\rho_A}{1} \le 1/2 + \varepsilon$
\end{itemize}

Notice that Alice's cheating probability depends only on Bob's output. This is because a cheating Alice will always claim that she won, so she wins when Bob outputs `Alice wins'. We have the same behavior for a cheating Bob.

Similarly, we can define an unbalanced weak coin flipping in this setting.
\begin{itemize}
\item The $0$ outcome corresponds to Alice winning. The $1$ outcome corresponds to Bob winning.
\item If Alice and Bob are honest then $\triple{00}{\rho_{AB}}{00} = z$ ; $\triple{11}{\rho_{AB}}{11} = 1-z$
\item If Alice cheats and Bob is honest then $P^*_A = \triple{0}{\rho_B}{0} \le z + \varepsilon$
\item If Bob cheats and Alice is honest then $P^*_B = \triple{1}{\rho_A}{1} \le (1-z) + \varepsilon$
\end{itemize}

We will use the following result by Mochon.
\begin{proposition}{\em\cite{Moc07}}\label{Mochon}
For every $\varepsilon > 0$, there exists a quantum $WCF(1/2,\varepsilon)$ protocol $P$. 
\end{proposition}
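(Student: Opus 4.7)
The plan is to follow Mochon's construction in \cite{Moc07}, which is deep and cannot reasonably be replicated in a short sketch; I can only outline the strategy. First, I would adopt Kitaev's point-game formalism, which reformulates the existence of a weak coin flipping protocol with a given bias as a purely combinatorial question about sequences of finitely supported probability distributions on the nonnegative reals. In this language, each move in the game must be expressible as a convex combination of ``elementary'' transitions whose validity is characterized by operator-monotone functions, and building a $WCF(1/2,\varepsilon)$ protocol reduces to exhibiting a valid point game that starts from a distribution encoding honest play and ends at a distribution whose associated bias is at most $\varepsilon$.

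Next, I would construct explicit point games of arbitrarily small bias by iteratively composing ``ladder'' moves. Each ladder transports a controlled amount of probability mass between weighted points in a way that tightens the gap between the two players' dual cheating values and $1/2$, at the cost of introducing additional support points and additional moves. By chaining together enough ladders of increasing resolution, one drives the bias below any prescribed $\varepsilon>0$. Some care is needed to verify that the composed sequence of moves is still a valid point game, which amounts to checking the operator-monotone validity condition at each step; this is where Mochon's detailed analysis of the calculus of such moves does most of the work.

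The main obstacle, and the step I expect to be the hardest to reproduce, is the converse direction of Kitaev's correspondence: turning the abstract combinatorial game back into a genuine quantum protocol. This requires constructing explicit unitaries $U^C_b,\ldots$ on Alice's and Bob's joint Hilbert space that realize each ladder move via finitely many rounds of quantum communication, and controlling the truncations needed to keep the state space finite-dimensional without blowing up the bias. Mochon handles this translation by showing that the approximations incurred can be absorbed into the overall $\varepsilon$ budget, yielding the proposition.
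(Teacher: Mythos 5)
The paper does not prove this proposition: it is imported verbatim from \cite{Moc07} and used as a black box (the only place the paper ``uses'' it is to instantiate the weak coin flipping subroutine in the upper-bound protocol, together with Proposition~\ref{CK09} for the unbalanced version). So there is no in-paper argument to compare your write-up against, and the honest answer is that a citation is all that is expected here.

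As a summary of Mochon's actual proof, your outline has the right architecture: Kitaev's reduction of protocol existence to time-dependent point games, validity of transitions characterized via operator monotone functions, an explicit family of point games built from ladder moves driving the bias to $0$, and the (hardest) converse step of turning a valid point game back into a bona fide finite protocol while absorbing approximation errors into the $\varepsilon$ budget. But be clear that this is a description of a proof, not a proof: every substantive claim (that the ladder transitions are valid, that the composed game achieves bias below $\varepsilon$, and above all the point-game-to-protocol direction of Kitaev's correspondence) is deferred to \cite{Moc07}, and the last of these occupies the bulk of Mochon's 80-page manuscript (and of the later simplification by Aharonov, Chailloux, Ganz, Kerenidis, and Magniez). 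One small slip: the unitaries $U^C_b$ you invoke are this paper's notation for the bit commitment phases, not for the coin flipping protocol; the protocol realizing a point game is built from a different sequence of two-outcome measurements and unitaries. For the purposes of this paper, citing \cite{Moc07} as the authors do is the correct move.
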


Note also that this construction can be extended to the unbalanced case. A procedure to use balanced $WCF$ protocols to unbalanced ones has been presented in~\cite{CK09}. This procedure was presented in the classical setting but can be easily extended to the quantum definitions of unbalanced weak coin.
\begin{proposition}[CK09]\label{CK09}
Let  $P$ be a $WCF(1/2,\varepsilon)$ protocol with $N$ rounds. Then, 
$\forall z \in [0,1]$ and $\ \forall k \in \N$, there exists a  $WCF(x,\varepsilon_0)$ protocol $Q$ such that:
\begin{itemize}
\item $Q$ uses $k\cdot N$ rounds.
\item $|x - z| \le 2^{-k}$.
\item $\varepsilon_0 \le 2\varepsilon$.
\end{itemize}
\end{proposition}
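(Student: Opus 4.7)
The plan is to build $Q$ from a binary expansion of $z$. Write $x = \sum_{i=1}^{k} z_i 2^{-i}$ where $z_i \in \{0,1\}$ is chosen so that $|x-z|\le 2^{-k}$ (essentially, $x$ is $z$ truncated to $k$ bits). Protocol $Q$ runs at most $k$ sequential invocations of $P$; I view each invocation as one ``round'' of $Q$. At round $i$, Alice and Bob execute $P$ and observe the outcome $c_i \in \{0,1\}$:
\begin{itemize}
\item if $z_i = 1$: outcome $c_i=0$ (``Alice wins $P$'') causes $Q$ to terminate with Alice as winner; outcome $c_i=1$ causes $Q$ to proceed to round $i+1$;
\item if $z_i = 0$: outcome $c_i=0$ causes $Q$ to proceed to round $i+1$; outcome $c_i=1$ causes $Q$ to terminate with Bob as winner.
\end{itemize}
If no termination occurs by the end of round $k$, Bob is declared the winner. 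Since each subprotocol uses $N$ rounds, $Q$ uses at most $kN$ rounds.

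Next I would verify completeness. In the honest execution, each round ``continues'' with probability exactly $1/2$ (since the honest coin is unbiased, regardless of $z_i$), so round $i$ is reached with probability $2^{-(i-1)}$ and contributes $2^{-i}$ to Alice's winning probability precisely when $z_i=1$. Summing gives honest Alice win probability $\sum_i z_i 2^{-i} = x$, as required.

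The core step is the cheating analysis. Let $p_i^*$ denote cheating Alice's maximum probability of winning $Q$ conditioned on reaching round $i$, and $p_i^{\mathrm{hon}}$ the corresponding honest value. Writing $a \le 1/2+\varepsilon$ for the probability Alice can force the outcome ``Alice wins $P$'' in a single invocation, the recurrence is $p_i^* \le a + (1-a)p_{i+1}^*$ if $z_i=1$ and $p_i^* \le a\cdot p_{i+1}^*$ if $z_i=0$; in both cases the function is nondecreasing in $a$, so the worst case is $a = 1/2+\varepsilon$. Setting $\delta_i = p_i^*-p_i^{\mathrm{hon}}$ and subtracting the honest recurrence, a short calculation (valid in both cases) gives
\begin{equation*}
\delta_i \;\le\; \tfrac{1}{2}\delta_{i+1} + \varepsilon.
\end{equation*}
With $\delta_{k+1}=0$ this geometric recursion yields $\delta_1 \le \varepsilon\sum_{j=0}^{k-1} 2^{-j} \le 2\varepsilon$, so $P_A^* \le x + 2\varepsilon$. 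A symmetric argument, applied round-by-round to a cheating Bob who prefers the opposite coin outcome, gives $P_B^* \le (1-x)+2\varepsilon$. Hence $\varepsilon_0 \le 2\varepsilon$.

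The main subtlety I expect is ensuring the recurrence bound holds uniformly in both cases ($z_i=0$ and $z_i=1$) with the same additive $\varepsilon$, rather than accumulating a separate $\varepsilon$ from each round; this is what forces the telescoping sum to converge to $2\varepsilon$ instead of growing with $k$. Everything else — the round count, the approximation of $z$, and the quantum-versus-classical distinction (the reformulation of $WCF$ at the end of the preliminaries makes the sequential composition straightforward since each subprotocol outputs a measured classical bit) — is routine.
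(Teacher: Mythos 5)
Your construction is correct and is essentially the one the paper is importing: the paper gives no proof of this proposition, deferring to \cite{CK09}, and the procedure there is exactly this binary-expansion sequential composition of the balanced protocol (with the quantum subtlety handled, as you note, by the fact that each invocation's outcome is a measured classical bit from Bob's output register, so his per-invocation cheating bound of $1/2+\varepsilon$ applies regardless of residual entanglement from earlier invocations). The step you flag as subtle does go through: in the $z_i=0$ case one gets $\delta_i \le \frac{1}{2}\delta_{i+1} + \varepsilon p_{i+1}^*$, and bounding $p_{i+1}^* \le 1$ as a whole (rather than bounding $p_{i+1}^{\mathrm{hon}}$ and $\delta_{i+1}$ separately, which would leave a $(\frac{1}{2}+\varepsilon)$ coefficient on $\delta_{i+1}$ and only give $2\varepsilon/(1-2\varepsilon)$) yields the uniform recurrence $\delta_i \le \frac{1}{2}\delta_{i+1}+\varepsilon$ and hence $\delta_1 \le 2\varepsilon$.
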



\section{Proof of the Lower Bound}
To prove the lower bound, we will show some generic cheating strategies for Alice and Bob that work for any kind of bit commitment scheme. We will then show that these cheating strategies give  a cheating probability of approximately 0.739 for any protocol. 
\subsection{Description of cheating strategies}
We denote by $\ket{\psi_b}$  the quantum state Alice and Bob share at the end of the commit phase. Let $\sigma_b = \Tr_{\spa{A}} \ketbra{\psi_b}{\psi_b}$ the state that Bob has after the commit phase when Alice honestly commits to bit $b$.
\subsubsection{Bob's cheating strategy}
The cheating strategy of Bob is the following:
\begin{itemize}
\item Perform the Commit phase honestly.
\item Guess $b$ by performing on the state at the end of the commit phase the optimal discriminating measurement between $\sigma_0$ and $\sigma_1$.  
\end{itemize}
First note that an all-powerful Bob can always perform this strategy, since he knows the honest states $\sigma_0$ and $\sigma_1$ and can hence compute and perform the optimal measurement. Let us analyze this strategy. We know~\cite{Hel67} that Bob can guess $b$ with probability $\frac{1}{2} + \frac{\Delta(\sigma_0,\sigma_1)}{2}$ and hence
\[
P^*_B \ge \frac{1}{2} + \frac{\Delta(\sigma_0,\sigma_1)}{2}
\]
\subsubsection{Alice's cheating strategy}
The cheating strategy of Alice is the following
\begin{itemize}
\item Perform a quantum strategy so that at the end of the commit phase, Bob has the state $\sigma_+ = \frac{1}{2}\left(\sigma_0 + \sigma_1\right)$. 
\item In order to reveal a specific value $b$, send $b$ then apply a local quantum operation such that the actual joint state of the protocol, $\ket{\phi_{b}}$, satisfies $|\braket{\phi_{b}}{\psi_{b}}| = F(\sigma_+,\sigma_{b})$. Perform the rest of the reveal phase honestly.
\end{itemize}

First note that an all-powerful Alice can perform this strategy. An honest Alice has a strategy to make Bob's state after the commit phase equal to $\sigma_b$ for both $b=0$ and $b=1$. A cheating Alice creates a qubit $\frac{1}{\sqrt{2}}(\ket{0}+\ket{1})$. Conditioned on 0 (resp. 1), she applies the strategy that will give Bob the state $\sigma_0$ (resp. $\sigma_1$). By doing this Bob's state at the end of the commit phase is exactly $\sigma_{+}$.
Moreover, by Uhlmann's theorem, Alice can compute and perform the local unitary in the beginning of the reveal phase to create a state $\ket{\phi_{b}}$ that satisfies $|\braket{\phi_{b}}{\psi_{b}}| = F(\sigma_+,\sigma_{b})$. 

For the analysis, since Bob accepts $b$ with probability $1$ when the joint state of the protocol is $\ket{\psi_b}$, he accepts with probability at least $|\braket{\phi_{b}}{\psi_{b}}|^2 = F^2(\sigma_+,\sigma_{b})$ when the joint state of the protocol is $\ket{\phi_b}$.
From this cheating strategy, we have that 
\[
P^*_A \ge \frac{1}{2}\left( F^2(\sigma_+,\sigma_{0}) +  F^2(\sigma_+,\sigma_{1})\right)
\]

\subsection{Showing the Lower Bound}

We have the following bounds for cheating Alice and cheating Bob. 
\begin{align*}
P^*_A & \ge \frac{1}{2}\left( F^2(\sigma_+,\sigma_{0}) +  F^2(\sigma_+,\sigma_{1})\right) \\
P^*_B & \ge \frac{1}{2} + \frac{\Delta(\sigma_0,\sigma_1)}{2}
\end{align*}
We now use the following inequality that will be proved in the next section
\begin{proposition}\label{FidelityProp}
Let $\sigma_0,\sigma_1$ any two quantum states. Let $\sigma_+ = \frac{1}{2}\left(\sigma_0 + \sigma_1\right)$. We have
\begin{align*}
 \frac{1}{2}\left( F^2(\sigma_+,\sigma_{0}) +  F^2(\sigma_+,\sigma_{1})\right) \ge \left(1 - (1 - \frac{1}{\sqrt{2}})\Delta(\sigma_0,\sigma_1)\right)^2 .
\end{align*}
\end{proposition}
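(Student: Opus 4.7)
The plan is to reduce the problem, via a Jordan-style decomposition of $\sigma_0$ and $\sigma_1$, to the single clean case of states with orthogonal supports, and then appeal to joint concavity of the fidelity. This is the natural strategy because the RHS of the inequality is linear in $\Delta$ under the square, and the constant $1/\sqrt{2}$ is exactly the fidelity one gets between $(X_0+X_1)/2$ and $X_b$ when $X_0,X_1$ have orthogonal supports.

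\textbf{Step 1: the decomposition.} Write $\Delta = \Delta(\sigma_0,\sigma_1)$ and apply the Jordan decomposition to $\sigma_0-\sigma_1$, obtaining positive operators $P_+,P_-$ with orthogonal supports such that $\sigma_0-\sigma_1 = P_+ - P_-$ and $\Tr(P_+)=\Tr(P_-)=\Delta$. Setting $\tau := (\sigma_0 - P_+)/(1-\Delta) = (\sigma_1 - P_-)/(1-\Delta)$ (both expressions agree and are positive semidefinite, because $\sigma_0-P_+ = \sigma_1+P_- \ge 0$), and $X_b := P_{(-1)^b}/\Delta$, one obtains the convex decomposition
\[
\sigma_0 = (1-\Delta)\tau + \Delta X_0,\qquad \sigma_1 = (1-\Delta)\tau + \Delta X_1,
\]
where $\tau,X_0,X_1$ are density matrices and $X_0,X_1$ have orthogonal supports. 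Consequently,
\[
\sigma_+ = (1-\Delta)\tau + \Delta\cdot\tfrac{X_0+X_1}{2}.
\]

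\textbf{Step 2: joint concavity of fidelity.} Using the standard fact that $F$ is jointly concave, I would apply it with weights $(1-\Delta,\Delta)$ to conclude, for each $b\in\{0,1\}$,
\[
F(\sigma_+,\sigma_b) \;\ge\; (1-\Delta)\,F(\tau,\tau) + \Delta\,F\!\left(\tfrac{X_0+X_1}{2},\,X_b\right) \;=\; (1-\Delta) + \Delta\,F\!\left(\tfrac{X_0+X_1}{2},\,X_b\right).
\]

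\textbf{Step 3: the orthogonal-support computation.} Because $X_0$ and $X_1$ are supported on orthogonal subspaces, $((X_0+X_1)/2)^{1/2} = (X_0^{1/2}+X_1^{1/2})/\sqrt{2}$, so a direct substitution into the definition of fidelity gives $((X_0+X_1)/2)^{1/2} X_b ((X_0+X_1)/2)^{1/2} = X_b^2/2$, hence $F((X_0+X_1)/2,X_b) = \Tr(X_b/\sqrt{2}) = 1/\sqrt{2}$. Combining with Step~2 yields
\[
F(\sigma_+,\sigma_b) \;\ge\; 1 - \left(1-\tfrac{1}{\sqrt{2}}\right)\Delta \qquad \text{for } b\in\{0,1\}.
\]
Squaring both sides (the right-hand side is non-negative since $\Delta\le 1$) and averaging over $b$ gives the claimed bound.

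The only non-obvious part is the decomposition in Step~1, but its validity is a routine consequence of the Jordan decomposition of $\sigma_0-\sigma_1$ and the definition of the trace distance. Once the decomposition is in hand, joint concavity plus a one-line computation in the orthogonal-support case does all the work, and I expect the proof to be essentially as short as the sketch above.
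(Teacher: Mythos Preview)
Your Step~1 is where the argument breaks. The identity you use, $\sigma_0 - P_+ = \sigma_1 + P_-$, is a sign slip: from $\sigma_0 - \sigma_1 = P_+ - P_-$ one gets $\sigma_0 - P_+ = \sigma_1 - P_-$, and this operator is \emph{not} positive semidefinite in general. Take two non-orthogonal pure qubit states, e.g.\ $\sigma_0 = \ketbra{+}{+}$ and $\sigma_1 = \ketbra{0}{0}$: the positive eigenvector of $\sigma_0 - \sigma_1$ is not $\ket{+}$, so $P_+$ is a rank-one operator whose support lies outside the one-dimensional range of $\sigma_0$, and $\sigma_0 - P_+$ has a strictly negative eigenvalue. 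More generally, if a decomposition $\sigma_b = (1-\Delta)\tau + \Delta X_b$ with a \emph{common} state $\tau$ and orthogonally supported $X_0,X_1$ existed, then $\Delta X_0 - \Delta X_1$ would necessarily be the Jordan decomposition of $\sigma_0-\sigma_1$, forcing exactly the $\tau$ you wrote down; so your candidate is the only possible one, and the counterexample rules it out.

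The decomposition \emph{does} work when $\sigma_0,\sigma_1$ commute (the classical $\min(p_i,q_i)$ split), and that is essentially how the paper proceeds: it first packages the two fidelities into a single pair of block-diagonal states $(\rho_0,\rho_+)$, then applies the POVM attaining $F(\rho_0,\rho_+)$ to obtain diagonal states $(D_0,D_+)$ with the same fidelity and no larger trace distance, and only then runs the classical version of your Steps~2--3 (phrased as monotonicity of fidelity under an explicit CPTP map built from the $\min$-decomposition). Your concavity argument and the orthogonal-support computation are fine; what is missing is a reduction to the commuting case before you invoke them.
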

Let $t=\Delta(\sigma_0,\sigma_1)$. From the above Proposition, we have the following bounds.
\begin{align*}
P^*_A & \ge \frac{1}{2}\left( F^2(\sigma_+,\sigma_{0}) +  F^2(\sigma_+,\sigma_{1})\right) \ge 
\left(1 - (1 - \frac{1}{\sqrt{2}})t\right)^2\\
P^*_B & \ge \frac{1}{2} + \frac{\Delta(\sigma_0,\sigma_1)}{2} = \frac{1 + t}{2} 
\end{align*}
We get the optimal cheating probability by equalizing these two bounds, {\it ie.}
\[  \left(1 - (1 - \frac{1}{\sqrt{2}})t\right)^2 = \frac{1 + t}{2} 
\]
Notice that the same cheating probabilities appeared in the analysis of a weak coin flipping protocol in \cite{KN04}.
Solving the equation gives $t\approx0.4785$ and hence we have
\begin{theorem}
In any quantum bit commitment protocol with cheating probabilities $P_A^*$ and $P_B^*$ we have 
$\max\{P^*_A,P^*_B\} \ge 0.739$.

\end{theorem}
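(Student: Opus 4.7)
My plan is to combine the two generic cheating-strategy lower bounds derived just above with Proposition \ref{FidelityProp}, and then to finish with a short one-variable optimization. Bob's optimal discrimination strategy gives $P^*_B \ge (1+t)/2$ where $t = \Delta(\sigma_0,\sigma_1)$, and Alice's Uhlmann-based superposition attack gives $P^*_A \ge \tfrac{1}{2}(F^2(\sigma_+,\sigma_0) + F^2(\sigma_+,\sigma_1))$. Feeding Proposition \ref{FidelityProp} into the latter yields $P^*_A \ge (1 - (1-1/\sqrt{2})t)^2$, so both bounds are now functions of the single scalar $t \in [0,1]$.

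From here the argument is a short monotonicity plus quadratic-root calculation. The function $g(t)=(1+t)/2$ is strictly increasing on $[0,1]$, while $f(t) = (1 - (1-1/\sqrt{2})t)^2$ is strictly decreasing there (its base stays in $[1/\sqrt{2},1]$). So for every bit commitment protocol, $\max\{P^*_A,P^*_B\} \ge \max\{f(t), g(t)\}$, and this outer maximum is minimized exactly at the unique crossing point $t^* \in (0,1)$ of $f$ and $g$. Setting $(1 - (1-1/\sqrt{2})t)^2 = (1+t)/2$ and applying the quadratic formula yields $t^* \approx 0.4785$, whence $(1+t^*)/2 \approx 0.7393 \ge 0.739$, giving the claimed bound.

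The main obstacle: essentially all the substantive content has already been packed into Proposition \ref{FidelityProp}, so the theorem itself reduces to monotonicity and a root of a quadratic, both routine. If I had to also supply the proposition, that is where the real work would lie. The generic weak bound $F(\rho,\sigma) \ge 1 - \Delta(\rho,\sigma)$ only yields $P^*_A \ge (1 - t/2)^2$, which combined with Bob's bound produces a cheating probability of only about $0.72$, not $0.739$; obtaining the sharper coefficient $1 - 1/\sqrt{2}$ must exploit the averaging structure $\sigma_+ = (\sigma_0+\sigma_1)/2$. I would attack the proposition by using Uhlmann's theorem to realize $F(\sigma_+,\sigma_b)$ as an inner product of purifications, writing a single purification of $\sigma_+$ as a balanced superposition of well-chosen purifications of $\sigma_0$ and $\sigma_1$, and then performing a Cauchy--Schwarz-style analysis in which the worst-case geometry (extremal when $\sigma_0$ and $\sigma_1$ are orthogonal pure states) produces the constant $1/\sqrt{2}$.
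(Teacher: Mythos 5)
Your proposal is correct and follows essentially the same route as the paper: both combine Bob's distinguishing bound $P^*_B \ge (1+t)/2$ and Alice's Uhlmann-based bound, apply Proposition~\ref{FidelityProp} to reduce everything to the single variable $t=\Delta(\sigma_0,\sigma_1)$, and solve $(1-(1-1/\sqrt{2})t)^2=(1+t)/2$ to obtain $t^*\approx 0.4785$ and the bound $0.739$. Your explicit monotonicity argument justifying why equalizing the two bounds gives the minimax is a small but welcome addition of rigor over the paper's terser ``equalize the two bounds'' step; the only divergence is in your sketched attack on Proposition~\ref{FidelityProp} itself (purifications plus Cauchy--Schwarz rather than the paper's reduction to classical distributions and an explicit CPTP map), but that proposition is not the statement under review.
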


\subsection{Proof of the fidelity Lemma}

In this Section, we show Proposition$~\ref{FidelityProp}$.

\begin{proofof}{of Proposition$~\ref{FidelityProp}$}
We will prove this Lemma in three steps. Let $\sigma_0,\sigma_1$ two quantum states and let $\sigma_+ = \frac{1}{2}\left(\sigma_0 + \sigma_1\right)$.
\paragraph{Step 1}
We first consider the states $\rho_0 = \frac{1}{2} \ketbra{0}{0} \otimes \sigma_0 + 
\frac{1}{2} \ketbra{1}{1} \otimes \sigma_1$ and 
$\rho_+ = \frac{1}{2} \ketbra{0}{0} \otimes \sigma_+ + 
\frac{1}{2} \ketbra{1}{1} \otimes \sigma_+$. 
We compute the trace distance and fidelity of these states
\begin{align}
\Delta(\rho_0,\rho_+) & = \frac{1}{2} \left(\Delta(\sigma_0,\sigma_+) + \Delta(\sigma_1,\sigma_+)\right) = \frac{1}{2}\Delta(\sigma_0,\sigma_1)  \label{Drho} 
\end{align}
In order to calculate the fidelity we note first that $\rho_+^{\frac{1}{2}} = \frac{1}{\sqrt{2}} \left( \ketbra{0}{0} \otimes \sigma_+^{\frac{1}{{2}}} + \ketbra{1}{1} \otimes \sigma_+^{\frac{1}{{2}}}\right)$. From the definition of fidelity we have

\begin{align*}
F(\rho_0,\rho_+) & = tr\left(\sqrt{\rho_+^{\frac{1}{2}} \rho_0 \rho_+^{\frac{1}{2}}}\right) \\
 & = tr\left(\sqrt{\frac{1}{4}\ketbra{0}{0} \otimes \sigma_+^{\frac{1}{2}}\sigma_0 \sigma_+^{\frac{1}{2}} + \frac{1}{4}\ketbra{1}{1} \otimes \sigma_+^{\frac{1}{2}}\sigma_1 \sigma_+^{\frac{1}{2}}  }\right) \\
& = tr\left(\frac{1}{2} \ketbra{0}{0} \otimes \sqrt{\sigma_+^{\frac{1}{2}}\sigma_0 \sigma_+^{\frac{1}{2}}} + \frac{1}{2}\ketbra{1}{1} \otimes \sqrt{\sigma_+^{\frac{1}{2}}\sigma_1 \sigma_+^{\frac{1}{2}}} \right) \\
& = \frac{1}{2} tr\left(\sqrt{\sigma_+^{\frac{1}{2}}\sigma_0 \sigma_+^{\frac{1}{2}}}\right) + 
 \frac{1}{2} tr\left(\sqrt{\sigma_+^{\frac{1}{2}}\sigma_1 \sigma_+^{\frac{1}{2}}}\right) \\
& = \frac{1}{2}\left(F(\sigma_0,\sigma_+) + F(\sigma_1,\sigma_+)\right)
\end{align*} 
Hence, by Cauchy-Schwartz we conclude that
\begin{align}
F^2(\rho_0,\rho_+) & \le \frac{1}{2} F^2(\sigma_0,\sigma_+) + \frac{1}{2} F^2(\sigma_1,\sigma_+)\label{Frho}
\end{align}

\paragraph{Step 2}

Consider the POVM $E = \{E_1,\dots,E_m\}$ with $p_i = tr(\rho_0 E_i)$ and 
$q_i = tr(\rho_+ E_i)$ such that $F(\rho_0,\rho_+) = \sum_{i} \sqrt{p_i q_i}$ (Prop. \ref{POVMfidelity}). We consider the states $D_0 = \sum_{i} p_i \ketbra{i}{i}$ and $D_+ = \sum_{i} q_i \ketbra{i}{i}$. For the trace distance and 
fidelity of these states, we have
\begin{align}
\Delta(D_0,D_+) & = \frac{1}{2} \sum_i |p_i - q_i|  \le \Delta(\rho_0,\rho_+) = \frac{1}{2}\Delta(\sigma_0,\sigma_1)
& \mbox{by Prop.}~\ref{SumTraceDistance},~\ref{PropPOVM} \mbox{ and Eq.}~\ref{Drho} \\
F(D_0,D_+) & = F(\rho_0,\rho_+)  = \sum_i \sqrt{p_i q_i}
\end{align}

\paragraph{Step 3}
Let us define $k$ such that $k/2 = \Delta(D_0,D_+)$. 
We now consider the states $T_0 = k \ketbra{0}{0} + (1-k) \ketbra{2}{2}$ and 
$T_+ = \frac{k}{2} \ketbra{0}{0} + 
\frac{k}{2} \ketbra{1}{1} + (1-k) \ketbra{2}{2}$. We calculate the trace distance and fidelity of these states
\begin{align}
\Delta(T_0,T_+) &= \frac{k}{2} = \Delta(D_0,D_+) \le \frac{\Delta(\sigma_0,\sigma_1)}{2} \\
F(T_0,T_+) &= \left(1 - k + \frac{k}{\sqrt{2}}\right) \ge \left( 1- (1-\frac{1}{\sqrt{2}})\Delta(\sigma_0,\sigma_1)\right)
\end{align}
The only thing remaining is to show that $F(T_0,T_+) \le F(D_0,D_+)$. 
To prove this, we construct a completely positive trace preserving operation $Q$ such that
$Q(T_0) = D_0$ and $Q(T_+) = D_+$. We can then conclude using Proposition~\ref{CPTPfidelity}.


We define $D_1 = \sum_{i} r_i \ketbra{i}{i}$ with $p_i + r_i = 2q_i$. This means that 
$D_+ = \frac{1}{2} D_0 + \frac{1}{2} D_1$ and $\Delta(D_0,D_1) = k$.

Let $A = \{i : p_i \ge r_i\}$ and $B = \{i : p_i < r_i\}$. Let $w_i = \min\{p_i,r_i\}$ We consider the following $Q$
\begin{align*}
Q(\ketbra{0}{0}) & = \sum_{i \in A} \frac{1}{k} (p_i - r_i) \ketbra{i}{i} \\
Q(\ketbra{1}{1}) & = \sum_{i \in B} \frac{1}{k} (r_i - p_i) \ketbra{i}{i} \\
Q(\ketbra{2}{2}) & = \sum_i \frac{1}{1 - k} w_i \ketbra{i}{i} \\
Q(\ketbra{i}{j}) & = 0 \quad \mbox{ for } i \neq j
\end{align*}
Since $\Delta(D_0,D_1) = k$, we have in particular that 
$\sum_{i} w_i = 1 - k$ ;
$\sum_{i \in A} (p_i - r_i)  = \sum_{i \in B} (r_i - p_i) = k$ (see Proposition~\ref{SumTraceDistance}). $Q$ is hence a completely positive trace preserving operation. We now have:
\begin{align*}
Q(T_0) & = k \sum_{i \in A} \frac{1}{k} (p_i - r_i) \ketbra{i}{i} + (1-k) \sum_i \frac{1}{1-k} w_i \ketbra{i}{i} \\
& = \sum_{i \in A} (p_i - r_i) \ketbra{i}{i} + \sum_i w_i \ketbra{i}{i}\\
& = \sum_{i \in A} (p_i - r_i + r_i) \ketbra{i}{i} + \sum_{i \in B} p_i \ketbra{i}{i} \\
& = \sum_{i} p_i \ketbra{i}{i} = D_0
\end{align*}
Similarly, we have
\begin{align*}
Q(T_+) & = 
\frac{k}{2} \sum_{i \in A} \frac{1}{k} (p_i - r_i) \ketbra{i}{i} +\frac{k}{2}
\sum_{i \in B} \frac{1}{k}(r_i - p_i) \ketbra{i}{i} + (1-k) \sum_i \frac{1}{1-k} w_i \ketbra{i}{i} \\
& =  \sum_{i \in A} \frac{p_i - r_i}{2} \ketbra{i}{i}
+ \sum_{i \in B} \frac{r_i - p_i}{2} \ketbra{i}{i} + \sum_i w_i \ketbra{i}{i} \\
& = \sum_{i \in A} (r_i + \frac{p_i - r_i}{2}) \ketbra{i}{i} + 
\sum_{i \in B} (p_i + \frac{r_i - p_i}{2}) \ketbra{i}{i} \\
& = \sum_i q_i \ketbra{i}{i} = D_+
\end{align*}
From this, we conclude that 
\begin{align}
F(D_0,D_+) = F(Q(T_0),Q(T_+)) \ge F(T_0,T_+).
\end{align} 
Putting everything together, we have using equations (2),(4),(6),(7)
\begin{align*}
\frac{1}{2} \left(F^2(\sigma_0,\sigma_+) + F^2(\sigma_1,\sigma_+)\right)
\ge F^2(\rho_0,\rho_+) = F^2(D_0,D_+) \ge F^2(T_0,T_+) \ge \left(1 - (1 - \frac{1}{\sqrt{2}})\Delta(\sigma_0,\sigma_1)\right)^2
\end{align*} 
\end{proofof}


\section{Proof of the Upper Bound}

In this section we describe and analyze a protocol that proves the optimality of our bound.

\begin{theorem}\label{UpperBound}
There exists a quantum bit commitment protocol that uses a weak coin flipping protocol with cheating probability $1/2+\epsilon$ as a subroutine and achieves cheating probabilities less than $0.739+O(\epsilon)$.
\end{theorem}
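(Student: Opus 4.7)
The plan is to build a protocol whose honest commit phase leaves Bob with one of two fixed states $\sigma_0,\sigma_1$ chosen to be extremal for the lower bound: fix the unique $t^{*}\approx 0.4785$ satisfying $(1-(1-1/\sqrt{2})t^{*})^2 = (1+t^{*})/2$ and let $\sigma_0,\sigma_1$ be a pair of qubit states with $\Delta(\sigma_0,\sigma_1)=t^{*}$. The Helstrom bound (Proposition~\ref{Hel67}) then gives $P_B^{*}\le (1+t^{*})/2 \approx 0.739$ plus whatever additive slack the weak coin flipping subroutine contributes.

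For the commit phase, Alice prepares a joint state $\ket{\psi_b}$ on $\spa{A}\otimes\spa{B}$ whose reduction to $\spa{B}$ equals $\sigma_b$, using a canonical purification indexed by a random classical label $x$ that she keeps on her side. A naive reveal in which Alice sends $(b,x)$ and Bob projects onto the expected pure state already admits the Uhlmann-steering strategy of the lower bound, so a cheating Alice's probability of convincing Bob is at most $\frac{1}{2}(F^2(\sigma_+,\sigma_0)+F^2(\sigma_+,\sigma_1))$, which by Proposition~\ref{FidelityProp} is at most $(1-(1-1/\sqrt{2})t^{*})^2 \approx 0.739$.

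The twist is that this bound requires the symmetric average over $b$, whereas a naive reveal lets Alice choose whichever bit gives her the higher conditional fidelity. To enforce the average I would use an unbalanced $WCF(z,\epsilon)$ subroutine inside the reveal phase with the parameter $z$ tuned so that the weak-coin outcome selects between two checks on Bob's side in a way that penalizes Alice's asymmetric cheating (for example, which bit value Bob verifies, or whether Bob performs the projective check at all). Mochon's Proposition~\ref{Mochon} combined with Proposition~\ref{CK09} supplies the required $WCF(z,\epsilon)$ at cost $O(\epsilon)$ in each player's cheating probability, and tuning $z$ at the critical trace-distance value equalizes Alice's two branches at the common bound $0.739 + O(\epsilon)$.

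The main obstacle I expect is controlling a fully adaptive cheating Alice whose strategy can correlate her commit-phase purification with her moves inside the WCF. The analysis should decompose into two cases: if Alice wins the weak coin flip she pays the $z+\epsilon$ from the $WCF$ security guarantee, and if she loses she must pass the Uhlmann-limited projective check, bounded by the fidelity expression above. Ruling out that an interleaved strategy exceeds this split bound will require invoking the CPTP monotonicity of fidelity (Proposition~\ref{CPTPfidelity}) to absorb Alice's post-commit local operations, and the POVM bound on trace distance (Propositions~\ref{SumTraceDistance} and~\ref{PropPOVM}) to argue that any correlation Alice tries to introduce can only degrade her optimal Helstrom/Uhlmann advantage, matching the lower bound of Theorem~\ref{LowerBound}.
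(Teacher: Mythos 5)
There is a genuine gap here, and it concerns the central idea of the construction. First, you invoke Proposition~\ref{FidelityProp} in the wrong direction: that proposition states $\frac{1}{2}\left(F^2(\sigma_+,\sigma_0)+F^2(\sigma_+,\sigma_1)\right) \ge \left(1-(1-\frac{1}{\sqrt{2}})\Delta(\sigma_0,\sigma_1)\right)^2$, a \emph{lower} bound used to show a cheating Alice succeeds \emph{at least} that often; it cannot upper-bound her cheating in a protocol. More importantly, in a commit phase where Alice alone prepares a purification of $\sigma_b$, nothing forces her to leave Bob with $\sigma_+$: she may leave him an arbitrary $\xi$, and her success is bounded only by $\frac{1}{2}\left(F^2(\xi,\sigma_0)+F^2(\xi,\sigma_1)\right)$ maximized over all $\xi$. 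For states of the form $\sigma_b = p\,\ketbra{\overline{b}}{\overline{b}}+(1-p)\ketbra{\overline{2}}{\overline{2}}$ this maximum is $1-p/2$ (Alice overweights the common $\ket{\overline{2}}$ component), which is $3/4$ at $p=1/2$ and still about $0.76$ at the critical $p\approx 0.4785$ --- strictly worse than $0.739$. So the naive commit phase is not good enough, and no reveal-phase gadget can repair it, because by the start of the reveal phase Bob already holds $\xi$ and Alice's Uhlmann bound is already determined.

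The key move in the paper, which your proposal is missing, is to run the \emph{unbalanced} weak coin flip \emph{coherently inside the commit phase}, without measuring the outcome, so that the honest shared state is $\sqrt{p}\,\ket{L,b}_{\spa{A}}\ket{L,b,G_L}_{\spa{B}}+\sqrt{1-p}\,\ket{W,2}_{\spa{A}}\ket{W,2,G_W}_{\spa{B}}$, with the ``reveal-both'' component $\ket{2}$ tied to the branch in which Alice \emph{wins} the coin flip. The WCF security guarantee then directly caps the weight $r_2$ any cheating Alice can place on that component by $1-p+\eps$, and a constrained optimization (the paper's Appendix) shows $\frac{1}{2}\sum_b\left(\sqrt{pr_b}+\sqrt{(1-p)r_2}\right)^2 \le \left(1-(1-\frac{1}{\sqrt{2}})p\right)^2+O(\eps)$; Bob's side is the Helstrom bound $\frac{1+p}{2}+O(\eps)$, and equalizing at $p\approx 0.4785$ gives $0.739$. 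Your placement of the WCF in the reveal phase, to randomize which check Bob performs, does not constrain $r_2$ and addresses a non-issue (the average over the two revealed bits is already built into the definition of $P^*_A$), and the two-case split you sketch for an adaptive Alice is not a bound one could complete as stated.
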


Our protocol is a quantum improvement of the following simple protocol that achieves cheating probability $3/4$. Alice commits to bit $b$ by preparing the state $1/\sqrt{2}(\ket{bb}+\ket{22})$ and sending the second qutrit to Bob. In the reveal phase, she sends the first qutrit and Bob checks that the pure state is the correct one. It is not hard to prove that both Alice and Bob can cheat with probability $3/4$~\cite{Amb01,KN04}. The main idea in order to reduce the cheating probabilities for both players is the following: first we increase a little bit the amplitude of the state $\ket{22}$ in this superposition. This decreases the cheating probability of Bob. However, now Alice can cheat even more. To remedy this, we use the quantum procedure of a weak coin flipping so that Alice and Bob jointly create the above initial state (with the appropriate amplitudes) instead of having Alice create it herself. We present now the details of the protocol.

\subsection{The protocol}

\paragraph{Commit phase, Step 1}
Alice and Bob perform an unbalanced weak coin flipping procedure (without measuring the final outcome), where Alice wins with probability $1-p$ and Bob with probability $p$. As we said, we can think of this procedure as a big unitary operation that creates a joint pure state in the space of Alice and Bob. Moreover, Alice and Bob have each a special 1-qubit register that they can measure at the end of the protocol in order to read the outcome of the weak coin flipping. Here, we assume that they don't measure anything and that at the end  Alice sends back to Bob all her garbage qubits. In other words, in the honest case, Alice and Bob share the following state at the end of the weak coin protocol
\begin{align*}
\ket{\Omega} = \sqrt{p}  \ket{L}_{\spa{A}} \otimes \ket{L,G_L}_{\spa{B}} + \sqrt{1-p}  \ket{W}_{\spa{A}} \otimes \ket{W,G_W}_{\spa{B}}
\end{align*}
where $W$ corresponds to the outcome "Alice wins" and $L$ corresponds to the outcome "Alice loses". The spaces $\spa{A},\spa{B}$ correspond to Alice's and Bob's private quantum space.  The garbage states $\ket{G_W},\ket{G_L}$ are known to both players.

\paragraph{Commit phase, Step 2}

After the end of the weak coin flipping procedure, Alice does the following. Conditioned on her qubit being $W$, she creates two qutrits in the state $\ket{22}$ and sends the second to Bob. Conditioned on her qubit being $L$, she creates two qutrits in the state $\ket{bb}$ where $b$ is the bit she wants to commit to and sends the second to Bob. If the players are both honest, they share the following state:
\begin{align*}
\ket{\Omega_b} = \sqrt{p}  \ket{L,b}_{\spa{A}} \otimes \ket{L,b,G_L}_{\spa{B}} + \sqrt{1-p}  \ket{W,2}_{\spa{A}} \otimes \ket{W,2,G_W}_{\spa{B}}
\end{align*}

\COMMENT{
Let's $\ket{\overline{2}} = \ket{2,0,\phi_0}$, $\ket{\overline{b}} = \ket{b,1,\phi_1}$ for $b \in \zo$. The first register corresponds to Alice's last message of the commit phase, the second register corresponds to the outcome of the original coin flip and the third register corresponds to the garbage of this coin flip given to Bob. We have that
$\ket{\Omega_b} = \sqrt{p}\ket{0}_{\spa{A}}\ket{\overline{2}}_{\spa{B}} + \sqrt{1-p}\ket{1}_{\spa{A}}\ket{\overline{b}}_{\spa{B}}$. In the honest case, Bob hence has the state $\sigma_b = p \ketbra{\overline{2}}{\overline{2}} + (1-p) \ketbra{\overline{b}}{\overline{b}}$ depending on the bit $b$ Alice commits to.
}

\paragraph{Reveal phase}
In the reveal phase, Alice sends $b$ and all her remaining qubits in space $\spa{A}$ to Bob. Bob checks that he has the state $ \ket{\Omega_b}$.

\subsection{Analysis}
If Alice and Bob are both honest then Alice always successfully reveals the bit $b$ she committed to.
\paragraph{Cheating Bob}
Bob is not necessarily honest in the weak coin flipping protocol, however the weak coin flipping has small bias $\epsilon$. Since Alice is honest, Bob has all the qubits expect the one qubit which is in Alice's output register. At the end of the first step of the Commit phase, Alice and Bob share a state
\begin{align*}
\ket{\Omega^*} = \sqrt{p'} \ket{L}_{\spa{A}} \ket{\Psi_L}_{\spa{B}} + 
\sqrt{1 - p'} \ket{W}_{\spa{A}} \ket{\Psi_W}_{\spa{B}}
\end{align*}
for some states $\ket{\Psi_L},\ket{\Psi_W}$ held by Bob.
Recall that the outcome $L$ in Alice's output register corresponds to the outcome where Alice loses the weak coin flipping protocol. Hence, for any cheating Bob, since our coin flipping has bias $\eps$, we have $p' \le p + \eps$. At the end of the commit phase, depending on Alice's committed bit $b$, the joint state is 
\[
\ket{\Omega^*_b} = \sqrt{p'} \ket{L,b}_{\spa{A}} \ket{b,\Psi_L}_{\spa{B}} + 
\sqrt{1 - p'} \ket{W,2}_{\spa{A}} \ket{2,\Psi_W}_{\spa{B}}
\]
and Bob's density matrix is
\[
\sigma^*_b = p' \ketbra{b,\Psi_L}{b,\Psi_L} + (1-p')\ketbra{2,\Psi_W}{2,\Psi_W}. 
\]
By Proposition \ref{Hel67}, we have
\begin{align*}
P^*_B & = \Pr[\mbox{ Bob guesses } b] \leq \frac{1}{2} + \frac{\Delta(\sigma^*_0,\sigma^*_1)}{2} = \frac{1}{2} + \frac{p'}{2} \le \frac{1+p}{2}+ \frac{\eps}{2}
\end{align*}

\paragraph{Cheating Alice} 

Let $\sigma_b$ be Bob's reduced state at the end of the commit phase when both players are honest. Let $\ket{\overline{x}}=\ket{L,x,G_L}$ for $x \in \zo$ and $\ket{\overline{2}}=\ket{W,2,G_W}$. We have
\COMMENT{
\[ \sigma_b = p  \ketbra{L,b,G_L}{L,b,G_L} + (1-p)  \ketbra{W,2,G_W}{W,2,G_W}
\]
}
\[
\sigma_b = p  \ketbra{\overline{b}}{\overline{b}} + (1-p)  \ketbra{\overline{2}}{\overline{2}}
\]
Let $\xi$ be Bob's state at the end of the commit phase for a cheating Alice. Let $r_i = \triple{\overline{i}}{\xi}{\overline{i}}$ for $i \in \{0,1,2\}$. From the characterization of the fidelity in Proposition~\ref{CPTPfidelity}, we have that 
\begin{align*}
F(\xi,\sigma_b) \le \sqrt{pr_b} +\sqrt{(1-p)r_2}
\end{align*}
From standard analysis of bit commitment protocol (for example~\cite{KN04} ), we have using Uhlmann's Theorem that 
\begin{align*}
P^*_A & \le \frac{1}{2}\left(F^2(\xi,\sigma_0) +F^2(\xi,\sigma_1)\right) \\
& \le \frac{1}{2}\left(\sqrt{pr_0} +\sqrt{(1-p)r_2}\right)^2 + \frac{1}{2}\left(\sqrt{pr_1} +\sqrt{(1-p)r_2}\right)^2 
\end{align*}

In order to get a tight bound for the above expression, we use here the property of the weak coin flipping. Recall that $\ket{\overline{2}} = \ket{W,2,G_W}$ has its first register as $W$ (this corresponds to Alice winning the coin flip). On the other hand, $\ket{\overline{0}}$ and $\ket{\overline{1}}$ have $L$ as their first register, corresponding to the case where Bob wins.
For any cheating Alice, she can win the weak coin flip with probability smaller than $1-p +\eps$ and hence this means in particular that $r_2 \le 1-p + \eps$. Moreover, $r_0 +r_1 + r_2 \le 1$.
For $\eps < p(1 - \frac{1}{2 - p})$ , we can show that this quantity is maximal when $r_2$ is maximal and $r_0 = r_1=(p-\eps)/2$ (proven in Appendix~\ref{AppLowerBound}). This gives us
\begin{align*}
P^*_A &\le \left( \sqrt{p \cdot \frac{p - \eps}{2}} + \sqrt{(1-p)(1-p+\eps)} \right)^2
 \le \left(1 - (1 - \frac{1}{\sqrt{2}})p\right)^2 + O(\eps)
\end{align*}

\paragraph{Putting it all together}
Except for the terms in $\eps$, we obtain exactly the same quantities as in our lower bound. By equalizing these cheating probabilities, we have
\[
\max\{P^*_A,P^*_B\} \approx 0.739 + O(\eps)
\]
Since we can have $\eps$ arbitrarily close to $0$ (Proposition \ref{Mochon}) and we can have an unbalanced weak coin flipping protocol with probability arbitrarily close to $p$ (Proposition \ref{CK09}), we conclude that our protocol is arbitrarily close to optimal.


\section{Proof of the classical lower bound}
In this Section, we show a $3/4$ lower bound for classical bit commitment schemes when players additionally have the power to perform perfect (strong or weak) coin-flipping. This will show that unlike strong coin flipping, quantum and classical bit commitment are not alike in the presence of weak coin flipping. 

 We first describe such protocols in Section~$\ref{Description}$. In Section~$\ref{Cheating}$, we construct a cheating strategy for Alice and Bob for these protocols such that one of the players can cheat with probability at least $3/4$.
\subsection{Description of a classical bit commitment protocol with perfect coin flips}\label{Description}
We describe classical bit commitment schemes when players additionally have the power to perform perfect (strong or weak) coin-flipping. The way we deal with the coin is the following: when Alice and Bob are honest, they always output the same random value $c$ and both players know this value. We can suppose equivalently that a random coin $c$ is given publicly to both Alice and Bob each time they perform coin flipping. We describe any BC protocol with coins as follows:
\begin{itemize}
\item Alice and Bob have some private randomness $R_A$ and $R_B$ respectively.
\item \emph{Commit phase:} Alice wants to commit to some value $x$. Let $N$ the number of rounds of the commit phase. For $i = 1$ to $N$: Alice sends a message $a_i$, Bob sends a message $b_i$, Alice and Bob flip a coin and get a public $c_i \in_R \zo$.
\item \emph{Reveal phase:} Alice wants to decommit to some value $y$ ($ = x$ if Alice is honest). 
\begin{enumerate}
\item Alice first reveals $y$. This is a restriction for the protocol but showing a lower bound for such protocols will show a lower bound for all protocols since this can only limit Alice's cheating possibilities without helping Bob.
\item Let $M$ the number of rounds of the reveal phase. For $i = 1$ to $M$: Alice sends a message $a'_i$, Bob sends a message $b'_i$, Alice and Bob flip a coin and get a public $c'_i \in_R \zo$.
\item Bob has an accepting procedure $Acc$ to decide whether he accepts the revealed bit or whether he aborts (if Bob catches Alice cheating).
\end{enumerate}
\end{itemize}

We denote the commit phase transcript by 
$t_C = (a_1,b_1,c_1,\dots,a_N,b_N,c_N)$. If Alice and Bob are honest, then we can write $t_C = T_C(R_A,R_B,c,x)$ where $T_C$ is a function fixed by the protocol that takes as input Alice and Bob's private coins $R_A,R_B$, the outcomes of the public coin flips $c = (c_1,\dots,c_N)$ as well as the bit $x$ Alice wants to commit to and outputs a commit phase transcript $t_C$.  If we can write $t_C = T_C(R_A,R_B,c,x)$ for some $R_A,R_B,c,x$, we say that $t_C$ is an honest commit phase transcript.

Similarly, we define the  decommit phase transcript by $t_D = (a'_1,b'_1,c'_1,\dots,a'_M,b'_M,c'_M)$. If Alice and Bob are honest, we can write $t_D = T_D(R_A,R_B,c',y,t_C)$, where $T_D$ is a function fixed by the protocol that takes as input Alice and Bob's private coins $R_A,R_B$, the outcomes of the public coin flips $c' = (c'_1,\dots,c'_M)$, the bit $y$ Alice reveals as well as the commit phase transcript $t_C$ and outputs a reveal phase transcript $t_D$.
 If we can write $t_D = T_D(R_A,R_B,c',y,t_C)$ for some $R_A,R_B,c',y$ and some honest commit phase transcript $t_C$, we say that $t_D$ is an honest reveal phase transcript.

Whether Bob accepts at the end of the protocol depends on both transcripts $t_C,t_D$ of the commit and reveal phase, the bit $y$ Alice reveals as well as Bob's private coins. We write that $Acc(t_C,t_D,y,R_B) = 1$ when Bob accepts.

Note that in the honest case, Bob always accepts Alice's deommitment. This means that we can transform Alice's honest strategy in the reveal phase to a deterministic strategy which will also be always accepted. This fact will be useful in the proof.

\subsection{Proof of the classical lower bound}\label{Cheating}
In this Section, we construct cheating strategies for Alice and Bob such that one of the players will be able to cheat with probability greater than $3/4$. We only consider cheating strategies where Alice and Bob are honest during the coin flips so again, they will be modeled as public and perfectly random coins. Moreover, Alice and Bob will always be honest during the commit phase. 

Before describing the cheating strategies we need some definitions. More particularly, we consider a cheating Alice who cheats during the reveal phase by following a deterministic strategy $A^*$. For a fixed honest commit phase transcript $t_C$, we can write the transcript of the reveal phase as a function of $A^*,R_B,c',y,t_C$, more precisely $T_D^*(A^*,R_B,c',y,t_C)$.

\begin{definition}
We say that $R_B$ is consistent with $t_C$ if and only if there exist $R_A,c,x$ such that $t_C = T_C(R_A,R_B,c,x)$.
\end{definition}

\begin{definition}
Let $t_C$ an honest commit phase transcript. We say that $t_C \in A_y$ if and only if
\begin{align*}
\exists A^* \textrm{ s.t. } \forall c' \textrm{ and } \forall R_B \textrm{ consistent with } t_C,
 Acc(t_C,T_D^*(A^*,R_B,c',y,t_C),y,R_B) = 1
\end{align*}
\end{definition}
Intuitively, $t_C \in A_y$ means that if Alice and Bob output an honest commit phase transcript $t_C$, there is a deterministic strategy $A^*$ for Alice that allows her to reveal $y$ without Bob aborting, independently of Bob's private coins $R_B$. Since there is always a deterministic honest strategy for Alice in the reveal phase (when Alice and bob have been honest in the commit phase), we have 
\begin{align*}
\forall \ R_A,R_B,c,x \;\;\; \ \ t_C = T_C(R_A,R_B,c,x) \in A_x
\end{align*}
Notice also that for any honest commit phase transcript $t_C$, both players Alice and Bob can compute whether $t_C \in A_u$ for both $u=0$ and $u=1$.

\begin{definition} We define the probability
\[
p_u = \Pr [t_C = T_C(R_A,R_B,c,{u}) \in A_{\overline{u}}] \quad \mbox{where the probability is taken over uniform } R_A,R_B,c.
\]
Consider that Bob is honest. $p_u$ is the probability that if Alice behaves honestly in the commit phase and commits to $u$, she has a deterministic cheating strategy to reveal $\overline{u}$ which always succeeds (independently of $c',R_B$).
\end{definition}

\noindent
We can now describe and analyze our cheating strategies for Alice and Bob and prove our theorem
\begin{theorem}
For any classical bit commitment protocol with access to public perfect coins, one of the players can cheat with probability at least $3/4$.
\end{theorem}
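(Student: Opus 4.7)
The plan is to exhibit cheating strategies for Alice and Bob whose success probabilities satisfy $P^*_A + P^*_B \ge 3/2$, so that at least one of them is $\ge 3/4$. The strategies will be parametrized implicitly by $p_0$ and $p_1$ defined above, and a case analysis based on whether $p_0+p_1$ is large or small finishes the proof.

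\textbf{Alice's cheating strategy.} Alice picks $u\in_R\{0,1\}$ and commits honestly to $u$; together with honest $R_B,c$ this yields an honest commit transcript $t_C$ which she knows. Then, when asked to reveal $y\in\{0,1\}$, she proceeds as follows. If $y=u$, she reveals honestly, which always succeeds (using the deterministic honest decommitment strategy guaranteed by the assumption that honest Bob always accepts). If $y=\overline{u}$, she first checks whether $t_C\in A_{\overline{u}}$ (this is a property of $t_C$ and the protocol, hence computable by Alice); if so, she runs the deterministic $A^*$ promised by the definition of $A_{\overline{u}}$, which succeeds against \emph{every} consistent $R_B$ and every sequence of public coins $c'$, and otherwise she gives up. By the definition of $p_u$ the probability that $t_C\in A_{\overline{u}}$ conditioned on $u$ is $p_u$, so
\[
P^*_A \ge \tfrac{1}{2}\bigl(1+\tfrac{1}{2}p_1\bigr)+\tfrac{1}{2}\bigl(\tfrac{1}{2}p_0+1\bigr)\cdot\tfrac{1}{2}\cdot 2 = \tfrac{1}{2}+\tfrac{p_0+p_1}{4}.
\]

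\textbf{Bob's cheating strategy.} Bob plays honestly during the commit phase. At the end he computes the two bits $[t_C\in A_0]$ and $[t_C\in A_1]$, which he can do since this depends only on $t_C$ and the protocol. He then guesses: $b=0$ if $t_C\in A_0\setminus A_1$, $b=1$ if $t_C\in A_1\setminus A_0$, and uniformly at random if $t_C\in A_0\cap A_1$. The key point is that any honest transcript produced by an honest commitment to $x$ lies in $A_x$ (again because there is a deterministic honest reveal strategy). So conditioned on the honest bit being $u$, the transcript is always in $A_u$, and lies in $A_{\overline{u}}$ with probability exactly $p_u$. Hence
\[
P^*_B\ge\tfrac{1}{2}\bigl((1-p_0)+\tfrac{1}{2}p_0\bigr)+\tfrac{1}{2}\bigl((1-p_1)+\tfrac{1}{2}p_1\bigr)=1-\tfrac{p_0+p_1}{4}.
\]

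\textbf{Combining.} Adding the two bounds gives $P^*_A+P^*_B\ge 3/2$, so $\max\{P^*_A,P^*_B\}\ge 3/4$, completing the proof. The only subtlety lies in making Alice's strategy well-defined, and the crucial enabling fact is precisely the one baked into the definition of $A_y$: that the promised cheating strategy $A^*$ is \emph{universal}, working against all $R_B$ consistent with $t_C$ and all future public coins $c'$. This universality is what allows a cheating Alice --- who does not know $R_B$ --- to use the strategy, and it is ultimately what forces the tight trade-off between $p_0+p_1$ and the two cheating probabilities.
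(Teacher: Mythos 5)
Your proposal is correct and follows essentially the same route as the paper: honest-commit-then-universal-deviation for Alice giving $P^*_A \ge \frac12 + \frac{p_0+p_1}{4}$, the $A_0/A_1$ membership test for Bob giving $P^*_B \ge 1 - \frac{p_0+p_1}{4}$, and summing to $3/2$. One small slip: the displayed intermediate expression in Alice's bound actually evaluates to $1+\frac{p_0+p_1}{4}$ rather than $\frac12+\frac{p_0+p_1}{4}$, but the surrounding reasoning and the final bound are right (it should simply read $P^*_A \ge \frac12\bigl(1+p_0\bigr)\cdot\frac12+\frac12\bigl(1+p_1\bigr)\cdot\frac12$).
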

\begin{proof}
Let us fix a bit commitment protocol. We describe cheating strategies for Alice and Bob.
\paragraph{Cheating Alice}
\begin{itemize}
\item Commit phase:
Alice picks $x \in_R \zo$ and she honestly commits to $x$ during the commit phase. 
\item Reveal phase: if Alice wants to reveal $x$, she just remains honest during the reveal phase. By completeness of the protocol, this strategy succeeds with probability $1$. If Alice wants to reveal $\overline{x}$, we know by definition of $p_x$ that she succeeds with probability at least $p_x$.  This gives us:
\begin{align*}
P^*_A \ge \frac{1}{2} + \frac{p_x}{2} \end{align*}
since Alice chooses $x$ at random, we have:
\begin{align*}
P^*_A \ge \frac{1}{2} + \frac{p_0 + p_1}{4} \end{align*}
\end{itemize}

\paragraph{Cheating Bob}
As Alice, Bob is honest in the commit phase. Let $x$ the bit Alice committed to. Since Alice and Bob are honest the commit-phase transcript is $t_C = T_C(R_A,R_B,c,x)$ for uniformly random $R_A,R_B,c$. As said before, we know that $t_C \in A_x$.

At the end of the commit phase, Bob wants to guess the bit $x$ Alice commits to and he performs the following strategy: if $t_C \in A_0 \cap A_1$  he guesses $x$ at random. If $\exists! \; u \textrm{ s.t. }  t_C \notin A_u$  he guesses $x = \overline{u}$. 

We know that Bob succeeds in cheating with probability $1/2$ if $t_C \in A_{\overline{x}}$ and with probability $1$ if $t_C \notin A_{\overline{x}}$. This gives us $P^*_B \geq  p_x \cdot \frac{1}{2} + (1-p_x)\cdot 1 = 1 - \frac{p_x}{2}$. Since again, Alice's bit $x$ is uniformly  random, we have  
\begin{align*}
P^*_B \geq 1 - \frac{p_0 + p_1}{4} \end{align*}

\paragraph{Putting it all together} Taking Alice and Bob cheating probabilities together, we have \\
$P^*_A + P^*_B \ge 3/2$ {which gives} $
\max\{P^*_A,P^*_B\} \ge 3/4$.

\end{proof}

\bibliography{paper}
\bibliographystyle{alpha}




\newpage

\begin{appendix}
\section{Proof of $r_0 = r_1$ and $r_2$ maximal in the quantum lower bound}~\label{AppLowerBound}
In this Section, we show the following:
\begin{proposition}
Let
\[ P^*_A \le \frac{1}{2}\left(\sqrt{pr_0} +\sqrt{(1-p)r_2}\right)^2 + \frac{1}{2}\left(\sqrt{pr_1} +\sqrt{(1-p)r_2}\right)^2 
\]
with the constraints: $r_0,r_1,r_2 \ge 0$, $r_0 + r_1 + r_2 \le 1$ and $r_2 \le 1 - p + \eps$ for $\eps < p(1 - \frac{1}{2-p})$. This cheating probability is maximized for $r_0 = r_1 = \frac{p-\eps}{2}$ and $r_2 = 1 - p + \eps$.
\end{proposition}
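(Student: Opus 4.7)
The plan is to perform the maximization by sequentially reducing the number of free parameters. First, fix $r_2$ and $s = r_0 + r_1$ and observe that, after expanding the squares in $f$, the only part of the objective that depends on how $s$ is split between $r_0$ and $r_1$ is $\sqrt{p(1-p) r_2}\bigl(\sqrt{r_0} + \sqrt{r_1}\bigr)$; concavity of $\sqrt{\cdot}$ immediately gives that this is maximized at $r_0 = r_1 =: r$. With that reduction, the bound becomes $f(r,r_2) = pr + (1-p)r_2 + 2\sqrt{p(1-p)\, r\, r_2}$, whose partial derivative in $r$ is manifestly positive, so the budget constraint $2r + r_2 \le 1$ must be tight, yielding $r = (1-r_2)/2$.

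This reduces the problem to maximizing the one-variable function
\[ g(r_2) = \frac{p}{2} + \bigl(1 - \tfrac{3p}{2}\bigr) r_2 + \sqrt{2 p (1-p)\, r_2 (1-r_2)} \]
over $r_2 \in [0,\, 1-p+\eps]$. Since $g$ is affine plus a positive multiple of the concave function $\sqrt{r_2(1-r_2)}$, it is concave on $[0,1]$. I would then locate its unique critical point $r_2^*$ by setting $g'(r_2) = 0$, squaring to clear the square root, and substituting $u = 2r_2 - 1$ to obtain the quadratic $(2-3p)^2(1-u^2) = 8p(1-p)\, u^2$; the identity $(2-3p)^2 + 8p(1-p) = (2-p)^2$ then gives the clean closed form $r_2^* = 2(1-p)/(2-p)$. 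A short calculation shows that the hypothesis $\eps < p\bigl(1 - \tfrac{1}{2-p}\bigr)$ rearranges precisely to $1-p+\eps < r_2^*$, so the feasible interval lies strictly inside the monotone-increasing part of $g$. The maximum is therefore attained at the right endpoint $r_2 = 1-p+\eps$, whence $r_0 = r_1 = (p-\eps)/2$, as claimed.

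The main obstacle is the algebra in the one-variable step: the equation $g'(r_2) = 0$ looks unpleasant at first sight, and one has to spot the simplification $(2-3p)^2 + 8p(1-p) = (2-p)^2$ to recognize the critical point in closed form. Once that identity is in hand, the equivalence between the hypothesis on $\eps$ and the inequality $1-p+\eps < r_2^*$ is a routine rewriting, and the rest of the argument is just concavity bookkeeping together with the two symmetrization steps.
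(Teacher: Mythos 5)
Your proof is correct and follows essentially the same route as the paper's: symmetrize to $r_0=r_1$ for fixed $r_2$, saturate the budget constraint, and then show the resulting one-variable function $g$ is increasing on $[0,\,1-p+\eps]$ because that interval lies to the left of the critical point $r_2^*=2(1-p)/(2-p)$, which is exactly the threshold $1-\frac{p}{2-p}$ the paper obtains by a direct sign analysis of $g'$. The only differences are cosmetic: you use concavity of $\sqrt{\cdot}$ for the symmetrization and solve a quadratic for $r_2^*$, where the paper computes derivatives and reads off the sign of $g'$ directly from the factored form.
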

\begin{proof}
First note that the maximal cheating probability is achieved for $r_0 + r_1 + r_2 = 1$ since this cheating probability is increasing in $r_0,r_1,r_2$. 

We first show that $r_0 = r_1$. Let's fix $r_2$. This means that $S = r_0 + r_1 = 1 - r_2$ is fixed. Let $u = \sqrt{(1-p)r_2}$. We have
\begin{align*}
P^*_A \le f(r_0) = \frac{1}{2}\left(\sqrt{pr_0} + u\right)^2 + \frac{1}{2}\left(\sqrt{p(S-r_0)} + u\right)^2 .
\end{align*}
Taking the derivative, we have
\begin{align*}
f'(r_0) & = \frac{1}{2}\left(2\sqrt{p}\frac{1}{2\sqrt{r_0}} (\sqrt{pr_0} + u) - 2\sqrt{p}\frac{1}{2\sqrt{(S-r_0)}} (\sqrt{p(S-r_0)} + u)\right) \\
& = \frac{1}{2}\left(p + \frac{u\sqrt{p}}{\sqrt{r_0}} - p - \frac{u}{\sqrt{p}}{\sqrt{S-r_0}}\right) \\
& = \frac{u\sqrt{p}}{2}\left(\frac{1}{\sqrt{r_0}} - \frac{1}{\sqrt{S - r_0}}\right)
\end{align*}
We have $f'(r_0) > 0$ for $r_0 < S/2$ ; $f'(r_0) = 0$ for $r_0 = S/2$ ; $f'(r_0) < 0$ for $r_0 > S/2$. This means that the maximum of $f$ is achieved for $r_0 = S/2$ $\ie$ $r_0 = r_1$. \\

We now show that $r_2 = 1 - p + \eps$ gives the maximal cheating probability if $\eps$ is not too big. Since $P^*_A$ is maximal for $r_0 = r_1$ and for $r_0 + r_1 + r_2 = 1$, we have
\begin{align*}
P^*_A & \le \frac{1}{2}\left(\sqrt{pr_0} + \sqrt{(1-p)r_2}\right)^2 + \frac{1}{2}\left(\sqrt{pr_0} + \sqrt{(1-p)r_2}\right)^2 \\
& \le (\sqrt{pr_0} + \sqrt{(1-p)r_2})^2 \\
& \le \left(\sqrt{p(\frac{1 - r_2}{2})} + \sqrt{(1-p)r_2}\right)^2 = g(r_2)
\end{align*}
Again, we take the derivative of $g$.
\begin{align*}
g'(r_2) & = \left(- \frac{\sqrt{p}}{\sqrt{2(1-r_2)}} + \frac{\sqrt{1-p}}{{\sqrt{r_2}}}\right)\cdot\left(\sqrt{p(\frac{1 - r_2}{2})} + \sqrt{(1-p)r_2}\right)
\end{align*}
From this, we have
\begin{align*}
g'(r_2) \ge 0 & \Leftrightarrow \left(- \frac{\sqrt{p}}{\sqrt{2(1-r_2)}} + \frac{\sqrt{1-p}}{{\sqrt{r_2}}}\right) \ge 0 \\
& \Leftrightarrow \sqrt{\frac{p}{2(1-r_2)}} \le \sqrt{\frac{1 - p}{r_2}} \\
& \Leftrightarrow pr_2 \le 2(1-r_2)(1-p) \\
& \Leftrightarrow r_2 \le 1 - \frac{p}{2-p} 
\end{align*}
For $\eps < p(1 - \frac{1}{2-p})$, we have $1 - p + \eps < 1 - \frac{p}{2-p}$, so when $\eps < p(1 - \frac{1}{2-p})$, $g(r_2)$ is always increasing when $r_2 \le 1 - p + \eps$ and is maximal when $r_2 = 1 - p + \eps$, which concludes the proof.
\end{proof}
\end{appendix}

\end{document}